\title{Skolem Meets Schanuel} %TODO Please add
\author{Yuri Bilu}%
{Institut de Math\'ematiques de Bordeaux, Universit\'e de Bordeaux and CNRS, Talence, France}%
{yuri@math.u-bordeaux.fr}{}{}
\author{Florian Luca}{
% Wits University, South Africa;
% King Abdulaziz University, Saudi Arabia;
% MPI-SWS, Germany %short version for submission
School of Mathematics, University of the Witwatersrand, Johannesburg, South Africa\and
Research Group in Algebraic Structures \& Applications, King Abdulaziz University,  Saudi Arabia
% Max Planck Institute for Software Systems, Saarland Informatics Campus, Germany
}{%Florian.Luca@wits.ac.za%email
}{https://orcid.org/0000-0003-1321-4422%orchid
}{%acknowledgements
}
\author{Joris Nieuwveld}%
{Max Planck Institute for Software Systems, Saarland Informatics Campus, Germany}%
{jnieuwve@mpi-sws.org}{}{}
\author{Jo\"el Ouaknine}%
{Max Planck Institute for Software Systems, Saarland Informatics Campus, Germany}%
{joel@mpi-sws.org}{https://orcid.org/0000-0003-0031-9356}{%
Also affiliated with Keble College, Oxford as \href{http://emmy.network/}{\texttt{emmy.network}} Fellow.\\
DFG grant 389792660 as part of TRR 248 (see
\url{https://perspicuous-computing.science}).
}
\author{David Purser}{
% MPI-SWS, Germany %short version for submission
% Max Planck Institute for Software Systems, Saarland Informatics Campus, Germany 
University of Warsaw, Poland
}{%
}{https://orcid.org/0000-0003-0394-1634%
}{%acknowledgements
}
\author{James Worrell}{
Department of Computer Science, University of Oxford, UK
}{jbw@cs.ox.ac.uk%email
}{https://orcid.org/0000-0001-8151-2443%
}{%acknowledgements
% Supported by EPSRC Fellowship EP/N008197/1.
}
\authorrunning{Y. Bilu, F. Luca, J. Nieuwveld, J. Ouaknine, D. Purser,
  and J. Worrell} %TODO mandatory. First: Use abbreviated first/middle names. Second (only in severe cases): Use first author plus 'et al.'
\keywords{Skolem Problem, Skolem Conjecture, Exponential Local-Global
  Principle, $p$-adic Schanuel Conjecture} %TODO mandatory; please add comma-separated list of keywords
\newtheorem*{SC}{Skolem Conjecture}
\DeclareFontShape{T1}{lmr}{bx}{sc}{<->ssub*cmr/bx/sc}{}
\begin{document}

\maketitle

\begin{abstract}
  The celebrated Skolem-Mahler-Lech Theorem states that the set of zeros of a
  linear recurrence sequence is the union of a finite set and finitely
  many arithmetic progressions.  The corresponding computational
  question, the Skolem Problem, asks to determine whether a given linear
  recurrence sequence has a zero term.  Although the Skolem-Mahler-Lech Theorem is
  almost 90 years old, decidability of the Skolem Problem remains
  open.  The main contribution of this paper is an algorithm to solve
  the Skolem Problem for simple linear recurrence sequences (those
  with simple characteristic roots).  Whenever the algorithm
  terminates, it produces a stand-alone certificate that its output is
  correct---a set of zeros together with a collection of witnesses that no further
  zeros exist. We give a proof that the algorithm always terminates assuming two
  classical number-theoretic conjectures: the Skolem Conjecture (also
  known as the Exponential Local-Global Principle) and the $p$-adic
  Schanuel Conjecture. Preliminary experiments with an implementation
  of this algorithm within the tool \textsc{Skolem} point to the practical applicability of this method. 
\end{abstract}

\section{Introduction}
\subsection{The Skolem Problem}
A linear recurrence sequence (LRS) $\boldsymbol{u}=\langle u_n\rangle_{n=0}^\infty$
is a sequence of rational numbers satisfying the equation
\begin{equation}
\label{eq:RECUR}
u_{n+d} = c_1u_{n+d-1} + \cdots + c_{d-1}u_{n+1}+ c_du_{n} 
\end{equation}
for all $n \in \mathbb{N}$, where the coefficients $c_1,\ldots,c_d$
are rational numbers and $c_d \neq 0$.  We say that the above
recurrence has \emph{order} $d$.  We moreover say that an LRS is \emph{simple} if the
characteristic polynomial of its minimal-order recurrence has simple
roots.

The celebrated theorem of Skolem, Mahler, and Lech (see~\cite{Book})
describes the structure of the set $\{ n \in \mathbb{N} : u_n = 0\}$
of zero terms of an LRS as follows:
\begin{theorem}
  \label{thm:SML}
  Given a linear recurrence sequence $\boldsymbol{u}=\langle u_n\rangle_{n=0}^\infty$,
  the set of zero terms is a union of finitely many arithmetic
  progressions, together with a finite set.
\end{theorem}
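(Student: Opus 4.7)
The plan is to follow the classical $p$-adic argument due to Skolem, extended by Mahler and Lech to the rational/algebraic setting. First I would pass to the exponential-polynomial representation: there is a number field $K$ containing the characteristic roots $\alpha_1,\ldots,\alpha_k$ of the minimal recurrence, and polynomials $P_1,\ldots,P_k \in K[x]$ such that
\[
u_n \;=\; \sum_{i=1}^{k} P_i(n)\,\alpha_i^n \qquad (n \in \mathbb{N}).
\]
Since the $\alpha_i$ are nonzero algebraic numbers (as $c_d \ne 0$), I would choose a rational prime $p$ of good reduction, meaning $p$ is unramified in $K$ and each $\alpha_i$ is a unit in the completion $K_\mathfrak{p}$ for every prime $\mathfrak{p}$ of $K$ above $p$; such $p$ exist by excluding the finitely many primes dividing the numerators and denominators of the norms of the $\alpha_i$ and the discriminant of $K$.

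The next step is to exploit a uniform congruence. By the finiteness of the residue field, there is a positive integer $M$ with $\alpha_i^M \equiv 1 \pmod{\mathfrak{p}}$ for every $i$ and every $\mathfrak{p} \mid p$. Splitting $\mathbb{N}$ into the $M$ arithmetic progressions $\{r+Mn : n \ge 0\}$ for $r=0,\ldots,M-1$, I would fix $r$ and study
\[
f_r(n) \;=\; u_{r+Mn} \;=\; \sum_{i=1}^{k} P_i(r+Mn)\,\alpha_i^r\,(\alpha_i^M)^n.
\]
Since $\alpha_i^M = 1 + p\beta_i$ with $\beta_i$ a $\mathfrak{p}$-adic integer, the $p$-adic logarithm and exponential converge and let me write $(\alpha_i^M)^n = \exp_p(n\log_p(\alpha_i^M))$, which is a $p$-adic analytic function of $n$ on $\mathbb{Z}_p$. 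Hence $f_r$ extends to a $p$-adic analytic function $F_r : \mathbb{Z}_p \to K_\mathfrak{p}$ whose power series has coefficients tending to $0$.

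Now I apply Strassman's theorem: a nonzero $p$-adic analytic function on $\mathbb{Z}_p$ with coefficients tending to $0$ has only finitely many zeros in $\mathbb{Z}_p$. Thus for each $r$ either $F_r \equiv 0$, in which case the entire arithmetic progression $\{r+Mn : n \ge 0\}$ consists of zeros of $\boldsymbol{u}$, or $F_r$ has only finitely many zeros in $\mathbb{N}$. Taking the union over $r = 0,\ldots,M-1$ decomposes the zero set as a finite union of arithmetic progressions together with a finite residual set, which is exactly the claim.

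The main obstacle in this strategy is the passage from a formal $p$-adic congruence to genuine analyticity: one must verify that the choice of $p$ can be made so that $\log_p(\alpha_i^M)$ lies in the domain of convergence of $\exp_p$ simultaneously for all $i$, and that the resulting power series for $F_r$ really does have coefficients converging to $0$ in $K_\mathfrak{p}$. Once this is in place, Strassman's theorem does the combinatorial work essentially for free.
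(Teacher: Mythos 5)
Your proposal is the classical Skolem--Mahler--Lech argument and it is correct: the paper does not reprove Theorem~\ref{thm:SML} (it cites the literature for it), but the $p$-adic power-series representation it develops in Section~\ref{sec:power} --- choosing a prime of good reduction, passing to residue classes modulo an $M$ with $\alpha_i^M\equiv 1$, and extending $u_{r+Mn}$ to an analytic function on $\mathbb{Z}_p$ --- is exactly the machinery you describe, with Strassman's theorem supplying the finiteness. The convergence point you flag as the main obstacle is handled in the standard way (take $p$ odd and unramified, so $\alpha_i^M\equiv 1 \bmod \mathfrak{p}$ gives $v_p(\log\alpha_i^M)\geq 1>\frac{1}{p-1}$), so there is no gap.
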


The statement of Theorem~\ref{thm:SML} can be refined by considering  
the notion of \emph{non-degeneracy} of an LRS\@.  An LRS is  
non-degenerate if in its minimal recurrence the quotient of no two  
distinct roots of the characteristic polynomial is a root of unity.  A  
given LRS can be effectively decomposed as an interleaving of finitely  
many non-degenerate sequences, some of which may be identically zero.  
The core of the Skolem-Mahler-Lech Theorem is the fact that a non-zero  
non-degenerate linear recurrence sequence has finitely many zero
terms.  Unfortunately, all known proofs of this last assertion are
ineffective: it is not known how to compute the finite set of zeros of
a given non-degenerate linear recurrence sequence.  It is readily seen
that the existence of a procedure to do so is equivalent to the existence
of a procedure to decide whether an arbitrary given LRS has a zero
term.  The problem of deciding whether an LRS has a zero term is
variously known as the  Skolem Problem or the Skolem-Pisot Problem.

Decidability of the Skolem Problem is known only for certain special
cases, based on the relative order of the absolute values of the
characteristic roots.  Say that a characteristic root $\lambda$ is
\emph{dominant} if its absolute value is maximal among all the
characteristic roots.  Decidability is known in case there are at most
$3$ dominant characteristic roots, and also for recurrences of order
at most $4$~\cite{MST84,Ver85}.  However for LRS of order $5$ it is not
currently known how to decide the Skolem Problem.%---the hard case being 
%when there are four dominant roots and one non-dominant root. 

The Skolem Problem, along with closely related questions such as the
Positivity Problem, is intimately connected to various fundamental
topics in program analysis and automated verification, such as the
termination and model checking of simple while
loops~\cite{AKK21,KLO22,OuaknineW15} or the algorithmic analysis of
stochastic systems~\cite{AAO15,AgrawalAGT15,BJK20,CD21,PB20}.  It also
appears in a variety of other contexts, such as formal power
series~\cite{RS94,Soi76} and control theory~\cite{BlondelT00,FOP19}.
The Skolem Problem is often used as a reference to establish hardness
of other open decision problems; in addition to some of the previously
cited papers, the articles~\cite{BFJ21,COW15}, for example,
specifically invoke hardness of the Skolem Problem for simple LRS of order $5$.
Thus far, the only known complexity bound for the Skolem Problem is
NP-hardness~\cite{BP02}.
\subsection{The Skolem Conjecture and the Bi-Skolem Problem}
The notion of linear recurrence equally well makes sense for a
bi-infinite sequence
$\boldsymbol u = \langle u_n \rangle_{n=-\infty}^\infty$ of rational
numbers: one defines $\boldsymbol u$ to be a \emph{linear recurrent
  bi-sequence (LRBS)} if it satisfies the recurrence~\eqref{eq:RECUR}
for all $n\in \mathbb{Z}$.  Note that every LRS $\boldsymbol u$
extends uniquely to an LRBS satisfying the same recurrence (one obtains
such an extension by ``running the recurrence backwards'').  The
notions of simplicity and non-degeneracy carry over in the obvious way
to LRBS\@.  We remark also that the Skolem-Mahler-Lech Theorem remains
valid for LRBS---a non-degenerate LRBS has finitely many zeros.  The
analog of the Skolem Problem for LRBS is the \emph{Bi-Skolem Problem},
which asks, for a given LRBS
$\boldsymbol u = \langle u_n \rangle_{n=-\infty}^\infty$, whether there exists
$n\in \mathbb{Z}$ with $u_n=0$.

A major motivation to consider the Bi-Skolem Problem is the existence
of the \emph{Exponential Local-Global Principle}, a conjecture
introduced by Thoralf Skolem in 1937~\cite{Sko37}.  To formulate the conjecture we
first make some observations about the value set of an LRBS\@.  Given a
non-zero integer $b$, let $\mathbb{Z}[\frac{1}{b}]$ be the
subring of $\mathbb{Q}$ obtained by adjoining $\frac{1}{b}$ to
$\mathbb{Z}$.  We note that every rational LRBS takes values
in $\mathbb{Z}[\frac{1}{b}]$ for some $b$.  Indeed, if
$\boldsymbol u=\langle u_n\rangle_{n=-\infty}^{\infty}$ satisfies
recurrence~\eqref{eq:RECUR} and $\mathbb{Z}[\frac{1}{b}]$
contains the coefficients $c_1,\ldots,c_d$, the reciprocal $c_d^{-1}$ of the last
coefficient, and the initial values $u_0,\ldots,u_{d-1}$, then by
running the recurrence forwards and backwards from the initial values
we see that $u_n \in \mathbb{Z}[\frac{1}{b}]$ for all
$n\in\mathbb{Z}$.

%For $S$ a
%finite set of primes, the ring $R_S$ of
%\emph{$S$-integers} is the subring of $\mathbb{Q}$ consisting of all
%reduced fractions $\frac{a}{b}$ for which every prime factor of $b$
%lies in $S$.  We note that every rational LRBS takes values in $R_S$
%for some finite set of primes $S$.  Indeed, if 
%$\boldsymbol u=\langle u_n\rangle_{n=-\infty}^{\infty}$ satisfies 
%recurrence~\eqref{eq:RECUR} and $R_S$ contains all coefficients 
%$c_1,\ldots,c_d$, the reciprocal $c_d^{-1}$ of the last coefficient, 
%and the initial values $u_0,\ldots,u_{d-1}$, then by running the 
%recurrence forwards and backwards from the initial values we see that 
%$u_n \in R_S$ for all $n\in\mathbb{Z}$.

\begin{SC}
  \label{conj1}
  Let $\boldsymbol u$ be a simple rational LRBS taking values in the
  ring $\mathbb{Z}[\frac{1}{b}]$ for some integer $b$.
  Then $\boldsymbol u$ has no zero iff, for some integer $m \geq 2$
  with $\gcd(b,m)=1$, we have that $u_n \not\equiv 0 \bmod{m}$ for all
  $n \in \mathbb{Z}$.
\end{SC}

In other words, the Skolem Conjecture asserts that if a simple LRBS
fails to have a zero, then this is witnessed modulo $m$ for some $m$.
The truth of this conjecture immediately entails the existence of an algorithm to
solve the Bi-Skolem Problem for simple LRBS\@: simply search in parallel
either for a zero of the LRBS, or for a number $m$ substantiating the
absence of zeros. If the Skolem Conjecture holds, then the search must
necessarily terminate in finite time.

There exists a substantial body of literature on the Skolem
Conjecture, including proofs of a variety of special cases. In
particular, the Skolem Conjecture has been shown to hold for simple
LRBS of order 2~\cite{BBL13}, and for certain families of
LRBS of order 3~\cite{Sch77,Sch03}. In a different but
related vein, Bert\'ok and Hajdu have shown that, in some sense, the
Skolem Conjecture is valid in ``almost all''
instances~\cite{BH16,BH18}.

\subsection{Main Results}
It is immediate that the Bi-Skolem Problem reduces to the Skolem
Problem: an LRBS $\langle u_n \rangle_{n=-\infty}^\infty$ has a zero
term if and only if at least one of the one-way infinite sequences
$\langle u_n\rangle_{n=0}^\infty$ and
$\langle u_{-n}\rangle_{n=0}^\infty$, both of which are LRS, has a
zero term.  However it is open whether there is a reduction in the
other direction (equivalently, it is open whether an oracle for the
Bi-Skolem Problem can be used to determine \emph{all} the zeros of a
non-degenerate LRBS)\@.  Indeed, an oracle for the Bi-Skolem Problem
would appear to be of little utility in deciding the Skolem Problem
for an LRS whose bi-completion happens to harbour a zero at a negative
index.  It is likewise not known (in spite of the 
similar nomenclature) whether the truth of the Skolem Conjecture implies
decidability of the Skolem Problem.

Our first main result is as follows:
\begin{theorem}
  The Skolem Problem reduces to the Bi-Skolem Problem subject to the 
  weak $p$-adic Schanuel Conjecture. 
  \label{thm:main}
\end{theorem}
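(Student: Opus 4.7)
The plan is to design a decision procedure for the Skolem Problem that uses the Bi-Skolem oracle, with termination guaranteed by the weak $p$-adic Schanuel Conjecture.

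First, reduce to the non-degenerate simple case by the standard decomposition: a simple LRS $\boldsymbol{u}$ has a zero in $\mathbb{N}$ iff at least one of its finitely many non-degenerate simple interleavings (obtained by restricting to arithmetic progressions) does. Fix such a non-degenerate simple LRS $\boldsymbol{u} = \langle u_n\rangle_{n=0}^\infty$ and its bi-completion $\tilde{\boldsymbol{u}}$. Query the Bi-Skolem oracle on $\tilde{\boldsymbol{u}}$: if the oracle answers ``no'', return ``no''. If it answers ``yes'', the Skolem-Mahler-Lech Theorem guarantees that the zero set $Z = \{n \in \mathbb{Z} : u_n = 0\}$ is non-empty and finite, and the task reduces to determining $Z$ and checking whether $Z \cap \mathbb{N} \neq \emptyset$.

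The crux, invoking weak $p$-adic Schanuel, is to compute an effective bound $N$ such that $|n| \leq N$ for every $n \in Z$. Choose a prime $p$ of good reduction for $\boldsymbol{u}$, so that the characteristic roots $\lambda_1,\ldots,\lambda_d$ are $p$-adic units; after passing to a suitable arithmetic sub-progression to ensure that $\lambda_i^{\mathrm{step}}$ lies in the disc of convergence of $\log_p$, express $u_n$ as a convergent $p$-adic analytic function
\[
f(n) \;=\; \sum_i a_i \exp_p(n \log_p \lambda_i),
\]
whose coefficients, as a power series in $n$, are polynomial expressions in the algebraic numbers $a_i$ and the $p$-adic logarithms $\log_p \lambda_i$. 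Weak $p$-adic Schanuel implies that any $\mathbb{Q}$-linearly independent subset of these logarithms is algebraically independent over $\overline{\mathbb{Q}}$, and this rules out the unexpected vanishing of the leading non-trivial coefficients of $f$. Strassmann's Theorem then yields an effective bound on the number and location of the zeros of $f$ in $\mathbb{Z}_p$, and hence on the zeros of $u_n$ in $\mathbb{Z}$. With $N$ computed, evaluate $u_0, u_1, \ldots, u_N$ directly and return ``yes'' iff one of these is zero.

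The main obstacle is making the Schanuel-based step genuinely effective. Weak $p$-adic Schanuel is qualitative in nature (it asserts algebraic independence), whereas termination requires a computable $N$. Bridging this gap means pairing Schanuel with a quantitative Strassmann-type estimate on the power-series coefficients of $f$, and carefully handling the potentially degenerate configurations where several $\lambda_i$ have $p$-adically close powers (so that the apparent leading coefficient of $f$ could otherwise vanish); it is precisely these pathological algebraic relations among the $p$-adic logarithms that Schanuel forbids.
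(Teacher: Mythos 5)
Your setup---choosing a prime of good reduction, passing to a subprogression so that the characteristic roots lie in $1+p\mathbb{Z}_p$, writing the subsequence as a convergent $p$-adic power series whose coefficients are polynomial expressions in the logarithms of a multiplicatively independent set of roots, and invoking the weak $p$-adic Schanuel Conjecture to certify that a symbolically non-zero coefficient is genuinely non-zero---is essentially the paper's Section~\ref{sec:power} and Proposition~\ref{prop:oracle}. But the endgame has a genuine gap. You claim that Strassmann's Theorem then ``yields an effective bound on the number and location of the zeros of $f$ in $\mathbb{Z}_p$,'' from which you extract a computable $N$ with $Z\subseteq[-N,N]$. Strassmann bounds only the \emph{number} of zeros of $f$ in $\mathbb{Z}_p$; it gives no information about their location, and those zeros need not be rational integers. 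So knowing that $f$ has at most $k$ zeros in $\mathbb{Z}_p$ yields neither a bound $N$ nor a stopping criterion for a direct search: having found $j<k$ integer zeros, you cannot tell whether the remaining $k-j$ are larger integers or irrational elements of $\mathbb{Z}_p$. This is precisely the classical source of ineffectiveness in the Skolem--Mahler--Lech Theorem, and Schanuel does not remove it---it tells you which coefficients of $f$ vanish, not where the roots of $f$ lie. You acknowledge this obstacle in your final paragraph, but the proposed fix (a ``quantitative Strassmann-type estimate'') is not supplied and is not what is needed.

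The missing idea is to use the oracle recursively to \emph{isolate} each zero rather than to bound all zeros at once. After the oracle reports that a zero exists, one finds one by exhaustive search, reindexes it to position $0$ (so $a_0=f(0)=0$), and lets $j_0\geq 1$ be the least index with $a_{j_0}\neq 0$ (guaranteed non-zero, and hence detectable, by Schanuel). Setting $\nu=v_p(a_{j_0})$, an elementary valuation comparison (Proposition~\ref{prop:zero}) shows that the term $a_{j_0}(p^{\nu+1}x)^{j_0}$ strictly dominates all later terms, so $f(p^{\nu+1}x)\neq 0$ for every non-zero $x\in\mathbb{Z}_p$; that is, the progression through the found zero with common difference $M=Lp^{\nu+1}$ contains no further zeros. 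One then splits $\mathbb{Z}$ into the $M$ residue classes modulo $M$ and recursively invokes the oracle on each of the other $M-1$ classes; every recursive call answering ``yes'' uncovers a genuinely new zero of the original sequence, so the recursion terminates by the finiteness of the zero set of a non-degenerate LRBS. Note that this uses the oracle many times, not once per non-degenerate component as in your plan, and that no restriction to simple LRS is needed for this reduction (simplicity enters only later, for the Skolem Conjecture).
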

Schanuel's Conjecture~\cite[Pages 30-31]{Lang66} is a unifying 
conjecture in transcendental number theory that plays a key role in 
the study of the exponential function over both the real and complex 
numbers.  In particular, a celebrated paper of Macintyre and 
Wilkie~\cite{MacintyreW96} obtains decidability of the first-order 
theory of the structure $(\mathbb{R};<,\,\cdot\,,+,\exp)$ assuming 
Schanuel's Conjecture over $\mathbb{R}$.  A $p$-adic version of the 
Schanuel Conjecture, referring to the exponential function on the ring 
$\mathbb{Z}_p$ of $p$-adic integers, was formulated in~\cite{CM09}. 
This conjecture was shown in~\cite{Mariaule} to imply decidability of 
the first-order theory of the structure 
$(\mathbb{Z}_p;<,\,\cdot\,,+,\exp)$.

Since the reduction in Theorem~\ref{thm:main} specialises to simple
LRBS we obtain:
\begin{theorem}
  The Skolem Problem for simple LRS is decidable subject to the weak $p$-adic 
  Schanuel Conjecture and the Skolem Conjecture. 
\label{thm:main2}
\end{theorem}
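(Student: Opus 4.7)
The plan is to chain together the two tools developed earlier: the reduction of Theorem~\ref{thm:main} (which relies on the weak $p$-adic Schanuel Conjecture) and the parallel-search algorithm for the Bi-Skolem Problem enabled by the Skolem Conjecture. Given as input a simple LRS $\boldsymbol{u}$, the first step is to invoke the reduction of Theorem~\ref{thm:main} to obtain an equivalent Bi-Skolem instance. Because the bi-completion of a simple LRS satisfies the same minimal recurrence, it is itself a simple LRBS, and I would check that any auxiliary bi-sequences introduced in the course of the reduction preserve this simplicity property.

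The second step is to decide the Bi-Skolem Problem for the resulting simple LRBS using the Skolem Conjecture, following the parallel-search strategy sketched in the introduction. Fix an integer $b$ such that the LRBS takes values in $\mathbb{Z}[\frac{1}{b}]$, and run two semi-decision procedures concurrently. Procedure (i) enumerates $n \in \mathbb{Z}$ and tests whether $u_n = 0$; if it finds a zero, it returns ``yes''. Procedure (ii) enumerates integers $m \geq 2$ with $\gcd(b,m) = 1$; for each such $m$ the recurrence is invertible modulo $m$ (the leading coefficient $c_d$ is a unit in $\mathbb{Z}/m\mathbb{Z}$), so the residue sequence $\langle u_n \bmod m\rangle_{n \in \mathbb{Z}}$ is purely periodic and can be tabulated on one full period; if $0$ does not appear there, the procedure returns ``no''. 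Under the Skolem Conjecture exactly one of the two processes halts: either a zero exists and (i) discovers it, or no zero exists and (ii) eventually finds a witnessing modulus.

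The genuinely hard part of the argument is not in this composition but already in Theorem~\ref{thm:main}, which does the heavy lifting of excluding spurious zeros at negative indices via the weak $p$-adic Schanuel Conjecture. The only point specific to the present theorem that I would need to verify with care is that the reduction of Theorem~\ref{thm:main} can be set up so that the output LRBS is simple whenever the input LRS is simple, so that the Skolem Conjecture---which is formulated only for simple LRBS---is applicable to the instance produced. Granted this, Theorem~\ref{thm:main2} follows by simply running the reduction and then the two-pronged search, both of which terminate under the stated assumptions.
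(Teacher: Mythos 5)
Your proposal is correct and follows essentially the same route as the paper: Theorem~\ref{thm:main} (the Schanuel-conditional Turing reduction of Proposition~\ref{prop:oracle} and Theorem~\ref{thm:Turing-Reduce}) is combined with the Skolem-Conjecture-based parallel search for the Bi-Skolem Problem, and the paper likewise disposes of the simplicity-preservation point with the observation that the reduction ``specialises to simple LRBS'' (the arithmetic-progression subsequences $\langle u_{Mn+j}\rangle$ of a simple non-degenerate LRBS have characteristic roots $\lambda_i^M$, which remain distinct). The only minor imprecision is describing the output as a single ``equivalent Bi-Skolem instance'': the reduction is a Turing reduction making several oracle calls, each of which is answered by your two-pronged search.
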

The proof of Theorem~\ref{thm:main2} gives an algorithm that computes
the set of zeros of a non-degenerate simple LRBS\@.  The algorithm
moreover produces an unconditional certificate that its output is
correct, i.e., that all zeros have been found.  This certificate
consists of a partition of the input LRBS into finitely many
subsequences such that each subsequence contains at most one zero.
For a subsequence with no zero, the algorithm finds an integer $m$
such that the subsequence is non-zero modulo $m$; for a subsequence
with a zero, the algorithm provides a prime $p$ and a subsequence of
this subsequence with exactly one zero as $p$ divides the other terms a
well-described (bounded) number of times.  The conjectural aspect of
Theorem~\ref{thm:main2} solely concerns the proof that the algorithm
terminates on all input sequences.

We have implemented our algorithm within the \textsc{Skolem}
tool,\footnote{\textsc{Skolem} can be experimented with online at
  \texttt{https://skolem.mpi-sws.org/} .}
which enumerates the set of zeros of a given non-degenerate simple
LRS, and produces an independent (conjecture-free) certificate that all zeros have been
found. Preliminary experiments, which we present in
Section~\ref{tool}, point to the practical applicability of our algorithm.

\subsection{Related Work}

The decidability of the Skolem Problem is generally considered to have
been open since the early 1930s, as the $p$-adic techniques underpinning
the Skolem-Mahler-Lech Theorem were well understood already at the time
not to be effective. As noted earlier, a breakthrough establishing
decidability at order $4$ occurred in the
mid-1980s~\cite{MST84,Ver85}, making key use of Baker's theorem on
linear forms in logarithms of algebraic numbers. (It is perhaps worth
noting that the attendant algorithms are however unlikely to be implementable in
practice, as the constants involved are huge.) Very recently, we have
shown that the Skolem Problem is decidable at order $5$ assuming only
the Skolem Conjecture; and in the same paper we also obtained unconditional decidability
for reversible LRS\footnote{An integer LRS is \emph{reversible} if its
  completion as an LRBS only takes on integer values.} of order $7$ or
less~\cite{ LiptonLNOPW22}. In the present paper, we improve on the
former result by establishing a Turing reduction from the Skolem Problem
at order $5$ to the Bi-Skolem Problem for simple LRBS of order $5$;
this is the content of Theorem~\ref{thm:order-5}.

\section{Technical Background}
\label{sec:background}
\subsection{Computation in Number Fields}
\label{sec:symbolic}
A number field $\mathbb{K}$ is a finite-degree extension of
$\mathbb{Q}$.  For computational purposes, such a field can be
represented in the form $\mathbb{Q}[X]/(g(X))$, where $g(X)$ is the
minimal polynomial of a primitive element of $\mathbb{K}$.  With such a
representation it is straightforward to do arithmetic in $\mathbb{K}$,
including solving systems of linear equations with coefficients in
$\mathbb{K}$.  Moreover, given a polynomial $f(X) \in \mathbb{Q}[X]$,
one can compute a representation in the above form of the splitting
field $\mathbb{K}$ of $f$ over $\mathbb{Q}$, together with
representations of the roots of $f$ as elements of
$\mathbb{K}$~\cite{Landau}.

%specified by a basis $\{\alpha_1,\ldots,\alpha_d\}$ of $\mathbb{K}$ as
%a $\mathbb{Q}$-vector space, together with rational numbers
%$(q_{i,j,k})_{1\leq i,j,k \leq d}$ such that
%$\alpha_i\alpha_j = \sum_{k=1}^d q_{i,j,k}\alpha_k$.

In addition to basic arithmetic and linear algebra in $\mathbb{K}$, we
wish to determine whether some given elements
$\lambda_1,\ldots,\lambda_s \in \mathbb{K}$ are multiplicatively
independent and, if not, to exhibit $a_1,\ldots,a_s \in \mathbb{Z}$
such that $\lambda_1^{a_1} \cdots \lambda_m^{a_s}=1$.  For this we can
use the following result, which shows that if such a multiplicative
relation exists then there exists one in which the exponents
$a_1,\ldots,a_m$ have absolute value at most $B$ for some bound $B$
computable from the height of the $\lambda_i$ and the degree of the
number field $\mathbb{K}$.

\begin{theorem}[Masser~\cite{Mas88}]\label{th:masser}
  Let $\mathbb{K}$ be a number field of degree $D$ over $\mathbb{Q}$.
  For $s\geq 1$ let $\lambda_1,\ldots,\lambda_s$ be non-zero elements
  of $\mathbb{K}$ having height at most $h$ over $\mathbb{Q}$.  Then
  the group of multiplicative relations
        \begin{gather}
       L=\{(k_1,\ldots,k_s)\in \mathbb{Z}^s : \lambda_1^{k_1}\cdots \lambda_s^{k_s} = 1\}
       \label{eq:lattice}   
        \end{gather} 
        is generated (as an additive subgroup of $\mathbb{Z}^s$) by
            a collection of vectors all of whose entries have absolute
            value at most
        \[ (csh)^{s-1}D^{s-1}\frac{(\log(D+2))^{3s-3}}{(\log\log(D+2))^{3s-4}} \, ,\]
        for some absolute constant $c$.
        \label{the:masser}
\end{theorem}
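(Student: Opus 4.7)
The plan is to prove the bound via the classical geometry-of-numbers approach, combining a logarithmic embedding of $\mathbb{K}^*$ with a Lehmer-type lower bound on heights. First I would set up the logarithmic embedding $\phi \colon \mathbb{K}^* \to \mathbb{R}^{M_\mathbb{K}}$ sending $\alpha$ to the vector $(\log|\alpha|_v)_{v \in M_\mathbb{K}}$, where $|\cdot|_v$ ranges over the normalised absolute values of all (archimedean and non-archimedean) places of $\mathbb{K}$. A tuple $(k_1,\ldots,k_s) \in \mathbb{Z}^s$ lies in $L$ iff $\sum_i k_i \phi(\lambda_i)=0$ together with $\lambda_1^{k_1}\cdots\lambda_s^{k_s}=1$ as an element of the roots-of-unity group; the latter condition is handled by restricting to a finite-index sublattice (the group of roots of unity in $\mathbb{K}$ is cyclic of size polynomial in $D$).

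Second, let $r = \mathrm{rank}\, L$, and regard $L$ as a sublattice of $\mathbb{Z}^s$. By Minkowski's second theorem on successive minima applied to $L$ with respect to the $\ell^\infty$-ball, its minima $\mu_1 \le \cdots \le \mu_r$ satisfy $\mu_1 \cdots \mu_r \asymp_r 1/\det L$, and a standard Mahler-type argument produces a basis with $\ell^\infty$-norm at most $(r!/2)\mu_r$. Thus it suffices to bound $\mu_r$ from above, which by the previous inequality reduces to bounding $\det L$ from below and $\mu_1 \cdots \mu_{r-1}$ from below. Using the heights $h(\lambda_i) \leq h$, one shows that $\det L$ can be controlled in terms of the covolume of the image sublattice $\phi(\langle \lambda_1,\ldots,\lambda_s\rangle)$ inside the kernel-complement of the natural map, which is at most polynomial in $sh$.

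Third, and crucially, one needs a lower bound for $\mu_1,\ldots,\mu_{r-1}$: the shortest non-trivial relations in $L$ cannot be too short, because any short relation forces the product $\lambda_1^{k_1}\cdots\lambda_s^{k_s}$ (or a small power thereof) to be a root of unity of small degree, producing a non-torsion element of $\mathbb{K}^*$ with very small height. This is precisely where Dobrowolski's effective Lehmer-type bound
\[
 h(\alpha) \geq \frac{c}{D} \Bigl(\frac{\log\log(D+2)}{\log(D+2)}\Bigr)^{3}
\]
for $\alpha \in \mathbb{K}^*$ of degree at most $D$ and not a root of unity enters. Iterating this bound on suitable subgroups produces the factor $\bigl((\log(D+2))^3/(\log\log(D+2))^3\bigr)^{r-1}$, while the remaining factor $(csh)^{s-1}D$ arises from the covolume estimate combined with losses in the successive-minima inequalities.

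The main obstacle is the careful bookkeeping of constants and exponents: the Dobrowolski factor must be propagated $r-1 \leq s-1$ times through the successive-minima inequalities without loss, and the convex body used for Minkowski's theorem must be chosen so that its volume matches the shape induced by the heights under $\phi$, typically a weighted product of intervals with one weight per place. Obtaining the stated exponent $3s-3$ in the numerator but $3s-4$ in the denominator (rather than the naive $3s-3$) requires exploiting a single cancellation between one Dobrowolski factor and the denominator of the convex-body volume; this asymmetry is the technically most delicate part of Masser's argument and is what distinguishes the sharp bound from easier cruder versions.
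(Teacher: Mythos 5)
The paper does not actually prove this statement: Theorem~\ref{th:masser} is quoted from Masser~\cite{Mas88} and used as a black box for computing multiplicative relations among the characteristic roots, so there is no in-paper proof to compare against. Judged on its own terms, your sketch correctly names the two essential external ingredients of Masser-type bounds---a height/geometry-of-numbers framework and Dobrowolski's Lehmer-type lower bound, whose shape $\tfrac{c}{D}\bigl(\log\log(D+2)/\log(D+2)\bigr)^{3}$ visibly accounts for the factors $D^{s-1}(\log(D+2))^{3s-3}/(\log\log(D+2))^{3s-4}$---but the way you assemble them does not work as written.

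Two concrete problems. First, Minkowski's second theorem is inverted: for a rank-$r$ sublattice $L\subseteq\mathbb{Z}^s$ and the $\ell^\infty$-ball one has $\mu_1\cdots\mu_r\asymp_{r,s}\det L$ (the covolume), not $1/\det L$; since every nonzero integer vector has $\ell^\infty$-norm at least $1$, the minima satisfy $\mu_i\ge 1$ automatically, so the entire task is to bound $\det L$ from \emph{above}, not below. Second, and more seriously, Dobrowolski is applied to the wrong lattice. A short vector of $L$ is by definition a relation $\prod_i\lambda_i^{k_i}=1$; it yields no element of small positive height and no contradiction, so ``the shortest relations cannot be too short'' is simply false (if $\lambda_1=1$ then $e_1\in L$). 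Dobrowolski genuinely enters on the \emph{image} side: the group $\Gamma=\langle\lambda_1,\ldots,\lambda_s\rangle$ modulo torsion is free of rank $s-r$, the Weil height induces a norm on $\Gamma\otimes\mathbb{R}$ dominated by $\|k\|_\infty\, s h$, and Dobrowolski bounds its successive minima from below; a duality/transference step (the covolume of the saturation of $L$ in $\mathbb{Z}^s$ is the reciprocal of the covolume of the projected image lattice) then converts this into the needed upper bound $\det L\ll\bigl(c\,s h\,D(\log(D+2))^{3}/(\log\log(D+2))^{3}\bigr)^{\,s-r}$. This transference step is the heart of the argument and is missing from your sketch. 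Finally, the discrepancy between the exponents $3s-3$ and $3s-4$ is not produced by ``a cancellation with the convex-body volume''; it requires a separate accounting of the torsion (the index of $L$ in its saturation, controlled by the order of the group of roots of unity of $\mathbb{K}$), which your outline does not address.
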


\subsection{$p$-adic Numbers}
\label{sec:p-adic}
Let $p$ be a prime.  Define the \emph{$p$-adic valuation}
$v_p : \mathbb{Q} \rightarrow \mathbb{Z} \cup \{ \infty \}$ by
$v_p(0)=\infty$ and $v_p\left(p^\nu \cdot \frac{a}{b}\right) = \nu$
for all $a,b \in \mathbb{Z}\setminus\{0\}$ such that $\gcd(ab,p)=1$.
In other words, $v_p(x)$ gives the exponent of~$p$ as a divisor of
$x\in\mathbb{Q}$.  The map $v_p$ determines an absolute value
$|\cdot |_p$ on $\mathbb{Q}$, where $|x|_p:=p^{-v_p(x)}$ (with the
convention that $|0|_p=p^{-\infty}=0$).  Due to the fact that
$v_p(a+b) \geq \min(v_p(a),v_p(b))$, we have the strong triangle
equality: $|a+b|_p \leq \max(|a|_p,|b|_p)$ for all
$a,b \in \mathbb{Q}$.  In other words, $|\cdot|_p$ is a
\emph{non-Archimedean} absolute value.  The field $\mathbb{Q}_p$ of
\emph{$p$-adic numbers} is the completion of $\mathbb{Q}$ with respect
to $|\cdot |_p$.  The absolute value $|\cdot |_p$ extends to a
non-Archimedean absolute value on $\mathbb{Q}_p$.  The
ring of \emph{$p$-adic integers} is
$\mathbb{Z}_p := \{ x \in \mathbb{Q}_p : |x|_p \leq 1\}$.  The ring
$\mathbb{Z}_p$ contains a unique maximal ideal $p\mathbb{Z}_p$, with
the quotient $\mathbb{Z}_p/p\mathbb{Z}_p$ being isomorphic to
$\mathbb{F}_p$ (the finite field with $p$ elements).  When we refer to
elements of $\mathbb{Z}_p$ modulo $p$ we refer to their image under
this quotient map.
%Note that
%$\mathbb{Z}_p$ includes $\mathbb{Z}$ as a dense subset.  

Given a sequence of numbers $\langle a_n \rangle_{n=0}^\infty$ in $\mathbb{Z}_p$,
the infinite sum $\sum_{n=0}^\infty a_n$ converges to an element of
$\mathbb{Z}_p$ if and only if $|a_n|_p \rightarrow 0$ (equivalently,
$v_p(a_n) \rightarrow \infty$) as $n\rightarrow \infty$.  It follows
that given a
sequence $\langle a_n \rangle_{n=0}^\infty$ in $\mathbb{Z}_p$ with
$|a_n|_p \rightarrow 0$, the corresponding power series
$f(X)=\sum_{j=0}^\infty a_j X^j$ defines a function
$f:\mathbb{Z}_p\rightarrow \mathbb{Z}_p$.

Consider a monic polynomial $g(X) \in \mathbb{Z}[X]$ with non-zero
discriminant $\Delta(g)$.  Let $p$ be a prime that does not divide
$\Delta(g)$.
Denote by $\overline{g}(X) \in \mathbb{F}_p[X]$ the
polynomial obtained from $g$ by replacing each coefficient with its
residue modulo $p$.  It is well known that a sufficient condition for
$g$ to split completely over $\mathbb{Z}_p$ is that $\overline{g}$
split over $\mathbb{F}_p$. Indeed, in this situation one can use
Hensel's Lemma~\cite[Theorem 3.4.1]{gouvea} to ``lift'' each of the
roots of $\overline{g}$ in $\mathbb{F}_p$ to a distinct root in
$\mathbb{Z}_p$.  Moreover, by the Chebotarev density theorem~\cite{Lagarias}
there are infinitely many primes $p$ for which $\overline{g}$ splits
over $\mathbb{F}_p$.  Hence there are infinitely many primes $p$ such
that $g$ splits over $\mathbb{Z}_p$.  Note that the last statement
holds even without the assumption that $\Delta(g)\neq 0$, since
$g\in\mathbb{Z}[X]$ splits over $\mathbb{Z}_p$ whenever
$\frac{g}{\gcd(g,g')} \in \mathbb{Z}[X]$ splits over $\mathbb{Z}_p$
(and the latter has non-zero discriminant).

Let $p$ be an odd prime.\footnote{We omit the prime $p=2$ to avoid
  special cases in the facts below.}  The \emph{$p$-adic exponential}
is defined by the power series
\[ \exp(x) = \sum_{k=0}^\infty \frac{x^k}{k!}  \, , \]
which converges for all $x \in p\mathbb{Z}_p$.
The \emph{$p$-adic logarithm} 
is defined by
the power series
\[ \log(x) = \sum_{k=0}^\infty (-1)^{k+1} \frac{(x-1)^k}{k} \, , \]
which converges for all $x \in 1+p\mathbb{Z}_p$.  For
$x,y \in p\mathbb{Z}_p$ we have $\exp(x+y)=\exp(x)\exp(y)$ and for
$x,y \in 1+p\mathbb{Z}_p$ we have $\log(xy)=\log(x)+\log(y)$.  Indeed
we have that $\exp$ and $\log$ yield mutually inverse isomorphisms
between the additive group $p\mathbb{Z}_p$ and multiplicative group
$1+p\mathbb{Z}_p$.

Schanuel's Conjecture for the complex numbers is a powerful unifying
principle in transcendence theory.  We will need the following
$p$-adic version of the weak form of Schanuel's Conjecture, which can
be found, e.g., as~\cite[Conjecture 3.10]{CM09}.
\begin{conjecture}(Weak $p$-adic Schanuel Conjecture.)  Let
  $\alpha_1,\ldots,\alpha_s \in 1+p\mathbb{Z}_p$ be such that 
  $\log \alpha_1,\ldots,\log \alpha_s$ are linearly independent over 
  $\mathbb{Q}$.  Then $\log \alpha_1,\ldots,\log \alpha_s$ are
  algebraically independent over $\mathbb{Q}$, that is, for every 
  non-zero polynomial $P \in \mathbb{Q}_p [X_1,\ldots,X_s]$ whose
  coefficients are algebraic over $\mathbb{Q}$,
  it holds that
  $P(\log \alpha_1,\ldots,\log \alpha_s) \neq 0$.
\label{con:schanuel}
\end{conjecture}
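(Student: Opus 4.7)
The final statement is the Weak $p$-adic Schanuel Conjecture, which is a long-standing open problem in transcendental number theory, and I must be candid at the outset: no proof is currently known, and any plan I can offer is necessarily speculative. Even the Archimedean counterpart, Schanuel's conjecture over $\mathbb{C}$, has been open since the 1960s and is widely regarded as a central unresolved principle of transcendence theory. Below I indicate which directions I would explore and where the essential obstruction lies.

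A natural entry point is to proceed through the chain of progressively weaker statements that \emph{are} known, and to see how far one can push them. When $\alpha_1,\ldots,\alpha_s \in 1+p\mathbb{Z}_p$ happen to be algebraic over $\mathbb{Q}$, the $p$-adic analog of the Lindemann--Weierstrass theorem due to Mahler already handles the sub-case in which the $\log \alpha_i$ themselves turn out to be algebraic. The deeper $p$-adic analog of Baker's theorem, established by Brumer and extended by Kunrui Yu, further guarantees that if such $\log \alpha_1, \ldots, \log \alpha_s$ are $\mathbb{Q}$-linearly independent then they are in fact $\overline{\mathbb{Q}}$-linearly independent. Thus the conclusion of Conjecture~\ref{con:schanuel} is already known if one weakens ``algebraically independent'' to ``linearly independent over $\overline{\mathbb{Q}}$.'' The plan would be to start from the Brumer--Yu theorem and attempt to promote linear independence to algebraic independence, presumably via a sufficiently sharp quantitative transcendence measure for linear forms in $p$-adic logarithms.

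The core of the difficulty is that the jump from linear independence to algebraic independence of logarithms lies beyond all currently available transcendence techniques, Archimedean or $p$-adic. In the complex setting, the only comparable leaps that have been achieved rely on Philippon's algebraic-independence criterion or on Nesterenko's multiplicity-estimate machinery, both of which require an intricate construction of auxiliary polynomials satisfying sharp zero estimates. No $p$-adic analog of Nesterenko-type multiplicity estimates for the exponential on $p\mathbb{Z}_p$ is presently available in anything like the required strength, and the additional complication that the $\alpha_i$ in Conjecture~\ref{con:schanuel} are allowed to be transcendental elements of $1+p\mathbb{Z}_p$ places the statement strictly beyond the algebraic case. Developing such a non-Archimedean zero-estimate theory, and then welding it onto the $p$-adic interpolation techniques of Brumer--Yu, is in my view the main obstacle; it is precisely why the present paper, like the work of Macintyre--Wilkie and of Mariaule before it, adopts the weak $p$-adic Schanuel conjecture as a hypothesis rather than as a target for direct attack.
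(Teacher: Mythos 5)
You have correctly identified that this statement is a conjecture, not a theorem: the paper offers no proof of it, stating it as Conjecture~\ref{con:schanuel} (taken from Chatzidakis--Macintyre et al.~\cite{CM09}) and using it only as a hypothesis, exactly as you describe. Your survey of the known partial results also matches the paper's treatment, which records Brumer's $p$-adic Baker-type theorem (Theorem~\ref{thm:p-adic-baker}) as the established special case in which ``algebraically independent'' is weakened to ``linearly independent over the algebraic numbers.''
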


A known special case of 
Conjecture~\ref{con:schanuel} is the following result of
Brumer~\cite{brumer}, which is a $p$-adic analog of Baker's Theorem on
linear independence of logarithms of algebraic numbers.
\begin{theorem}
  \label{thm:p-adic-baker}
  Let $\alpha_1,\ldots,\alpha_s \in 1+p\mathbb{Z}_p$ be algebraic over
  $\mathbb{Q}$ and such that $\log \alpha_1,\ldots,\log \alpha_s$ are
  linearly independent over $\mathbb{Q}$.  Then
  $\beta_0+\beta_1\log \alpha_1+\cdots +\beta_s\log \alpha_s\neq 0$
  for all $\beta_0,\ldots,\beta_s \in \mathbb{Q}_p$ that are algebraic
  over $\mathbb{Q}$ and not all zero.
\end{theorem}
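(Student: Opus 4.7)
The plan is to follow the Gelfond--Baker transcendence method, adapted to $p$-adic analysis. I would proceed by contradiction, supposing there exist algebraic $\beta_0, \ldots, \beta_s \in \mathbb{Q}_p$, not all zero, with $\beta_0 + \sum_{i=1}^s \beta_i \log \alpha_i = 0$. Let $\mathbb{K}$ be a number field containing all the $\alpha_i$ and $\beta_i$. The assumption that $\log\alpha_1,\ldots,\log\alpha_s$ are $\mathbb{Q}$-linearly independent rules out the degenerate case that all $\beta_i$ with $i \geq 1$ vanish, so after relabelling I may assume $\beta_s \neq 0$ and solve for $\log\alpha_s$ as a $\mathbb{K}$-linear combination of $1, \log\alpha_1,\ldots,\log\alpha_{s-1}$.

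Next I would build a non-trivial auxiliary function of Gelfond--Baker shape, namely
\[
\Phi(z) = \sum_{\lambda_0,\ldots,\lambda_{s-1}=0}^{L} p(\lambda_0,\ldots,\lambda_{s-1})\, z^{\lambda_0}\, \alpha_1^{\lambda_1 z}\cdots \alpha_{s-1}^{\lambda_{s-1} z},
\]
with integer coefficients $p(\lambda_0,\ldots,\lambda_{s-1})$ of controlled height. Using the expression for $\log\alpha_s$, the values of $\Phi$ and its derivatives at integer points $n$ may be rewritten to incorporate $\alpha_s^n$ as well, so that the vanishing conditions translate into a linear system over $\mathbb{K}$. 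By Siegel's lemma, one can choose the coefficients $p(\lambda_0,\ldots,\lambda_{s-1})$, not all zero, so that $\Phi$ and its first $T$ derivatives vanish at $n = 0, 1, \ldots, N$, provided $L^s$ is sufficiently larger than $NT$.

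The heart of the proof is the $p$-adic zero-extrapolation step. A $p$-adic Schwarz-type lemma---argued via the Newton polygon of $\Phi$, or equivalently Schnirelmann integration over the $p$-adic disc---gives ultrametric upper bounds on $|\Phi(n)|_p$ at integer points $n$ beyond the initial range. Since the values $\Phi(n)$ are algebraic of controlled height and degree, a Liouville-type lower bound contradicts this upper bound unless the values in fact vanish. Iterating, one propagates the vanishing to a larger set of integers, eventually forcing a non-trivial linear combination of monomials $\alpha_1^{\lambda_1 n}\cdots \alpha_{s-1}^{\lambda_{s-1} n}$ to be zero at many points $n$; a Vandermonde argument then contradicts the multiplicative independence of the $\alpha_i$ that follows from $\mathbb{Q}$-linear independence of $\log\alpha_1,\ldots,\log\alpha_s$.

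The principal obstacle, and where the proof diverges most sharply from Baker's complex argument, is the analytic input. Jensen's formula and the complex maximum modulus principle are unavailable and must be replaced by ultrametric analogues, typically Newton polygon estimates or the $p$-adic maximum principle for power series convergent on a disc. A further subtlety is that the $p$-adic exponential converges only on $p\mathbb{Z}_p$, so $\alpha_i^z = \exp(z\log\alpha_i)$ is meaningful only for $z \in \mathbb{Z}_p$; one must verify that every evaluation point and every intermediate computation stays within the relevant disc of convergence. Calibrating the parameters $L, T, N$ and the extrapolation range so that all size estimates close simultaneously is the main technical work.
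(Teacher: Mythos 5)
The paper does not prove this statement at all: Theorem~\ref{thm:p-adic-baker} is quoted as a known theorem of Brumer, with a citation standing in for the proof, and it is used as a black box in the proof of Theorem~\ref{thm:order-5}. So there is no in-paper argument to compare yours against. Your outline does track the actual proof in the literature---Brumer's result is obtained precisely by transporting Baker's method to the $p$-adic setting, with the complex-analytic extrapolation tools (Schwarz lemma, maximum modulus, Hermite interpolation) replaced by ultrametric analogues---so as a description of \emph{how} the theorem is proved, it is accurate and correctly identifies the point of divergence from the Archimedean case.

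That said, what you have written is a plan rather than a proof, and the parts you defer are exactly the parts that constitute the theorem. The Siegel-lemma parameter choices, the precise form of the $p$-adic Schwarz/Newton-polygon lemma and the radius on which it applies, the Liouville lower bound bookkeeping, and the closing of the extrapolation induction are all named but not carried out; you acknowledge this yourself in the final sentence. Two specific points would need attention in any completed version. First, differentiating $\Phi$ introduces factors of $\log\alpha_i$, which are not algebraic, so the vanishing conditions do not literally form a linear system over a number field as stated; Baker's method handles this with a multivariate auxiliary function evaluated on the diagonal (or by normalising the derivatives), and a genuinely one-variable $\Phi$ as you wrote it will not support the derivative-versus-points trade-off needed for $s>2$. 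Second, the case $\beta_0\neq 0$ (the ``non-homogeneous'' case) requires an extra device in Baker's argument and is not addressed by solving for $\log\alpha_s$ alone. None of this makes your approach wrong---it is the right approach---but as submitted it is a roadmap to Brumer's paper rather than a self-contained proof, which is essentially the same status the theorem has in the paper itself.
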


\section{$p$-adic Power-Series Representation of LRBS}
\label{sec:power}
Let $\boldsymbol u = \langle u_n \rangle^{\infty}_{n=-\infty}$ be an
LRBS of rational numbers  satisfying the linear recurrence
\begin{gather*}
  u_{n+d} = c_1 u_{n-d-1} + \cdots + c_d u_{n} \quad(n\in\mathbb{Z}),
%\label{EQ:RECUR2}
\end{gather*}
where $c_d\neq 0$.  For the purposes of computing the zeros of
$\boldsymbol u$ we can assume without loss of generality that the
coefficients $c_1,\ldots,c_d$ of the recurrence are integers.
(It is easy to see that for any integer $\ell$ such that
$\ell c_i \in \mathbb{Z}$ for $i\in \{1,\ldots,d\}$, the scaled
sequence $\langle \ell^n u_n \rangle_{n=-\infty}^\infty$ satisfies
an integer recurrence.)  Write $g(X):=X^d - c_1X^{d-1}-\cdots - c_d$
for the \emph{characteristic polynomial} of the recurrence and write
\[ A := \begin{pmatrix}
    c_1 &  \cdots & c_{d-1}& c_d \\
    1    & \cdots   & 0         & 0 \\
    \vdots &         & \vdots   & \vdots \\
    0 &      \cdots             & 1 & 0 
  \end{pmatrix} \] for the \emph{companion matrix}.
Setting
\begin{gather*}
  \alpha:=\begin{pmatrix} 0 & \cdots & 0 &1\end{pmatrix}
  \quad\text{and}\quad \beta := \begin{pmatrix} u_{d-1}& \cdots & u_1
    & u_0\end{pmatrix}^T  ,
\end{gather*}
we have the matrix-exponential representation 
$u_n = \alpha A^n \beta$ for all $n\in \mathbb{Z}$.

The key tool in our approach---which also underlies the proof of the
Skolem-Mahler-Lech Theorem---is the representation of the LRBS
$\boldsymbol{u}$ in terms of a power series
$f(X)=\sum_{j=0}^\infty a_jX^j$ with coefficients in $\mathbb{Z}_p$.
In defining $f$ we work with an odd prime $p$ such that (i)~$p$ does
not divide the constant term $c_d$ of the recurrence; (ii)~$p$ does
not divide the discriminant $\Delta\left(\frac{g}{\gcd(g,g')}\right)$;
(iii)~the characteristic polynomial $g$ splits over $\mathbb{Z}_p$.
As explained in Section~\ref{sec:p-adic}, there are infinitely many
such primes.  Moreover, for a particular prime $p$ that does not
divide $\Delta\left(\frac{g}{\gcd(g,g')}\right)$,
we can easily
verify whether $g$ splits over $\mathbb{Z}_p$, since this is
equivalent to $\frac{g}{\gcd(g,g')}$ splitting over $\mathbb{F}_p$.

Write $\lambda_1,\ldots,\lambda_s \in \mathbb{Z}_p$ for the distinct
roots of $g$.
Let $\mathbb{K}$ be the subfield of $\mathbb{Q}_p$ generated by
$\lambda_1,\ldots,\lambda_s$.  Then $\mathbb{K}$ is a number field and
thus we can compute symbolically in $\mathbb{K}$ as described in
Section~\ref{sec:symbolic}.  It is well known that the sequence
$\boldsymbol{u}$ admits an exponential polynomial representation
\begin{gather}
  u_n =\sum_{i=1}^s Q_i(n) \lambda_i^n \quad(n \in \mathbb{Z}) \, 
  ,
\label{eq:EP}
\end{gather}
where $Q_i \in \mathbb{K}[X]$ has degree strictly less than the
multiplicity of $\lambda_i$ as a root of $g$.  The coefficients of
each polynomial $Q_i$ can be computed as the unique solution of the
system of linear equations that arises by substituting $n=0,\ldots,d-1$ in
Equation~\eqref{eq:EP} (where, recall, $d$ is the order of the
recurrence).

The companion matrix has determinant $\det(A)=\pm c_d$, which is 
non-zero modulo $p$; hence $A$ is invertible modulo $p$.  Let $L$ be 
the least positive integer such that $A^L\equiv I \bmod p$.  Being an eigenvalue of $A^L$, $\lambda_i^L \equiv 1 \bmod p$ for all 
$i\in \{1,\ldots,s\}$ and hence 
the $p$-adic logarithm $\log \lambda^L_i$ is defined
for all $i\in \{1,\ldots,s\}$.
We thus obtain the following representation of
the subsequence $\langle u_{Ln}\rangle_{n=-\infty}^\infty$
in terms of the $p$-adic exponential and logarithm functions:
\[
  u_{Ln} =  \sum_{i=1}^s Q_i(Ln) \lambda_i^{Ln} =
                                                          \sum_{i=1}^s Q_i(Ln) \exp(n \log 
      \lambda^L_i)  \,   .
\]
% \begin{eqnarray*}
    %     u_{Ln}& = & \sum_{i=1}^s Q_i(Ln) \lambda_i^{Ln} \\
%  &=& \sum_{i=1}^s Q_i(Ln) \exp(n \log 
%       \lambda^L_i)  \,   .
%       \end{eqnarray*}
  Now consider the power series $f(X) = \sum_{j=0}^\infty a_j X^j$
  such that
  \begin{gather}
  f(x) := \sum_{i=1}^s Q_i(Lx) \exp( x \log \lambda_i^L)  
\label{eq:exp-log}
\end{gather}
for all $x \in \mathbb{Z}_p$.  Then we have 
    $u_{Ln} = f(n)$ for all $n\in \mathbb{Z}$.
    Moreover, since $ a_j = \frac{1}{j!}f^{(j)}(0)$, by taking derivatives in
    \eqref{eq:exp-log} we obtain the following expression for the
    coefficients of $f$:
    \begin{gather} a_j = \frac{1}{j!} \sum_{i=1}^s \sum_{k=0}^j
    \binom{j}{k}   L^k Q_i^{(k)} (0) \left(\log 
    \lambda_i^L\right)^{j-k}  \, . 
\label{eq:polyF}
\end{gather}

In the remainder of this section we give an alternative formula for
the coefficients of $f$ as $p$-adically convergent sums of rational
numbers.  This provides a simple method to compute $v_p(a_j)$ that
avoids computing $p$-adic approximations of the characteristic roots,
as would be needed if we were to directly use~\eqref{eq:polyF}.

Recall that we have $A^L\equiv I \mod p$.  Let us say that
$A^L=I+pB$ for some integer matrix $B$.  Then we have:
\begin{eqnarray*}
u_{Ln} &=& \alpha A^{Ln} \beta \\
    &=& \alpha (I+pB)^n \beta \\
       &=& \sum_{k=0}^n \binom{n}{k} p^k \alpha B^k \beta \\
   &= & \sum_{k=0}^n  \frac{n(n-1)\ldots (n-k+1)}{k!} p^k \alpha B^k 
        \beta\\
  &=& \sum_{k=0}^\infty \frac{n(n-1)\ldots (n-k+1)}{k!} p^k  \alpha B^k 
      \beta\\
  &=& \sum_{k=0}^\infty \sum_{j=0}^\infty  c_{k,j}n^j \frac{p^k}{k!}
      \quad\mbox{for certain $c_{k,j} \in \mathbb{Z}$ with $c_{k,j}=0$
      for $j>k$}\\
       &=& \sum_{j=0}^\infty \sum_{k=j}^\infty c_{k,j}n^j
           \frac{p^k}{k!} \, .
\end{eqnarray*}
It remains to see why one can reverse the order of summation in the
last line above and why the resulting sums converge in $\mathbb{Z}_p$.
For this we can apply~\cite[Proposition 4.1.4]{gouvea}, for which we
require that the summand $c_{k,j}n^j 
  \frac{p^k}{k!}$ converge to $0$ as $j\rightarrow \infty$ and 
  converge to $0$ uniformly in $j$ as $k \rightarrow \infty$. 
  But this precondition follows from the fact that
  % the easily verified fact (see~\cite[Lemma 4.5.4]{gouvea}) that
$v_p(k!) < \frac{k}{p-1}$, from which we have
  $v_p\left(c_{k,j}n^j\frac{p^k}{k!} \right) \geq \frac{(p-2)k}{p-1}$ for all
  $k\geq j$.   

 Now consider the power series $\widetilde{f}(X):=\sum_{j=0}^\infty
  b_jX^j$ where 
\begin{gather}
    b_j := \sum_{k=j}^\infty 
    c_{k,j}\frac{p^k}{k!} \in \mathbb{Z}_p \, .
\label{eq:power-series}
  \end{gather}
  By the above considerations we have that
  $v_p(b_j) \geq \frac{(p-2)j}{p-1}$ and hence $\widetilde{f}$
  converges on $\mathbb{Z}_p$ and satisfies $\widetilde{f}(n)=u_{Ln}$
  for all $n\in \mathbb{Z}$.  In particular, the power series $f$ and
  $\widetilde{f}$ agree on $\mathbb{Z}$ and hence (e.g.,
  by~\cite[Proposition 4.4.3]{gouvea}) are identical, i.e., $a_j=b_j$
  for all $j\in \mathbb{N}$.  Thus we can use
  Equation~\eqref{eq:power-series} to exactly compute $v_p(a_j)$ for
  any $j$ such that $a_j\neq 0$.

  \section{Computing all the Zeros of an LRBS}
In this section we show, assuming the weak $p$-adic Schanuel
Conjecture, that the set of all zeros of a non-degenerate 
LRBS is computable using  an oracle for the Bi-Skolem Problem.  In
particular, this gives a Turing reduction of the Skolem Problem to the
Bi-Skolem Problem.

\begin{proposition}
  Let $f : \mathbb{Z}_p\rightarrow \mathbb{Z}_p$ be given by a
  convergent $p$-adic power series $f(X)=\sum_{k=0}^\infty a_k X^k$,
  with coefficients in $\mathbb{Z}_p$.  Suppose that $\ell$ is a positive
  integer such that $a_0=\cdots=a_{\ell-1}=0$ and $a_\ell \neq 0$.
  Then, writing $\nu:=v_p(a_\ell)$, we have $f(p^{\nu+1}x) \neq 0$ for all
  non-zero $x \in \mathbb{Z}_p$.
\label{prop:zero}
\end{proposition}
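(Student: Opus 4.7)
The plan is to substitute $y = p^{\nu+1} x$ directly into the power series and use the ultrametric (strong) triangle inequality to show that the $k = \ell$ term strictly dominates every other term in $p$-adic valuation.

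First I would write
\begin{gather*}
f(p^{\nu+1} x) \;=\; \sum_{k=\ell}^{\infty} a_k\, p^{(\nu+1)k}\, x^k,
\end{gather*}
and compute, for each $k \geq \ell$, the valuation
\begin{gather*}
v_p\!\left(a_k\, p^{(\nu+1)k}\, x^k\right) \;=\; v_p(a_k) + (\nu+1)k + k\, v_p(x).
\end{gather*}
At $k = \ell$ this equals exactly $\nu + (\nu+1)\ell + \ell\, v_p(x)$, since $v_p(a_\ell) = \nu$. For $k > \ell$, using only $v_p(a_k) \geq 0$ (which holds because $a_k \in \mathbb{Z}_p$) and $v_p(x) \geq 0$ (since $x \in \mathbb{Z}_p$), I would check that
\begin{gather*}
\bigl[(\nu+1)k + k\, v_p(x)\bigr] - \bigl[\nu + (\nu+1)\ell + \ell\, v_p(x)\bigr]
\;=\; (k-\ell)\bigl(\nu + 1 + v_p(x)\bigr) - \nu \;\geq\; (\nu+1) - \nu \;=\; 1,
\end{gather*}
where the inequality uses $k - \ell \geq 1$ and $\nu + 1 + v_p(x) \geq \nu + 1$.

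Thus the $k = \ell$ summand has strictly smaller valuation than every other summand (and the tail still goes to $\infty$ so convergence is not an issue). By the strong triangle equality for non-Archimedean absolute values, the valuation of the whole sum equals the valuation of the unique minimising term:
\begin{gather*}
v_p\bigl(f(p^{\nu+1} x)\bigr) \;=\; \nu + (\nu+1)\ell + \ell\, v_p(x) \;<\; \infty,
\end{gather*}
so $f(p^{\nu+1} x) \neq 0$, as required.

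There is no real obstacle here; the only subtlety worth flagging is the need to have the extra ``$+1$'' in the exponent $\nu + 1$ (substituting $p^{\nu} x$ would only give $\geq$ rather than $>$ in the valuation comparison, which would not suffice to invoke the strong triangle equality). The bound $v_p(x) \geq 0$ is what forces us to work on $\mathbb{Z}_p$, and the statement ``$x$ non-zero'' is used only to ensure $v_p(x)$ is finite so that the dominant term has finite valuation.
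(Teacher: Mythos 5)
Your proof is correct and follows essentially the same route as the paper's: substitute $p^{\nu+1}x$, show the $k=\ell$ term has strictly smaller $p$-adic valuation than every term with $k>\ell$ (using $v_p(a_k)\geq 0$, $v_p(x)\geq 0$, and the finiteness of $v_p(x)$ for $x\neq 0$), and conclude by the ultrametric property that the valuation of the sum is finite. The only cosmetic difference is that the paper phrases the final step via valuations of partial sums and a limit, whereas you invoke the strong triangle equality for the full series directly; both are equivalent.
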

\begin{proof}
  Let $x \in \mathbb{Z}_p$ be non-zero.  For every $m>\ell$ we have
  \begin{eqnarray*}
v_p(a_\ell (p^{\nu+1}x)^\ell) 
                            &=& \nu+ \ell(\nu+1) + v_p(x^\ell) \\
                            &< & m(\nu+1) + v_p(x^m) \quad\mbox{(since
                                 $\ell<m$ and $x\neq 0$)}\\
    &\leq & v_p(a_m(p^{\nu+1}x)^m) \, .
  \end{eqnarray*}
  It follows that for all $m \geq \ell$, 
\[  v_p\left( \sum_{k=0}^m a_k (p^{\nu+1}x)^k \right) =
                                                            v_p(a_\ell
                                                            (p^{\nu+1}x)^\ell)
                                                            \, . \]
Letting $m$ tend to infinity, we have                                                          
$v_p(f(p^{\nu+1}x)) = v_p(a_\ell (p^{\nu+1}x)^\ell) <
  \infty$ and we conclude that $f(p^{\nu+1}x) \neq 0$.
  \end{proof}

  \begin{proposition}
    Let $\boldsymbol u = \langle u_n \rangle_{n=-\infty}^\infty$ be a
    non-zero LRBS consisting of rational numbers.  Assuming the weak
    $p$-adic Schanuel Conjecture, one can compute a positive integer
    $M$ such that $u_{Mn} \neq 0$ for all
    $n \in \mathbb{Z}\setminus \{0\}$.
\label{prop:oracle}
\end{proposition}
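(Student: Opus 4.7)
The plan is to exploit the $p$-adic power-series representation from Section~\ref{sec:power} and invoke Proposition~\ref{prop:zero}: the high-level strategy is to find the least index $\ell$ with $a_\ell \neq 0$ in the series $f$ attached to $\boldsymbol u$, compute $\nu := v_p(a_\ell)$, and set $M := L p^{\nu+1}$. Throughout I adopt the standing assumption of this section that $\boldsymbol u$ is non-degenerate, reducing to it if necessary by the standard effective splitting into arithmetic progressions.

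First, I would pick an odd prime $p$ satisfying the three conditions of Section~\ref{sec:power}, compute the least $L \geq 1$ with $A^L \equiv I \pmod{p}$, and build the convergent power series $f(X) = \sum_{j=0}^{\infty} a_j X^j \in \mathbb{Z}_p[[X]]$ with $u_{Ln} = f(n)$ for all $n \in \mathbb{Z}$. Non-degeneracy of $\boldsymbol u$ guarantees that the $\lambda_i^L$ remain pairwise distinct (no ratio $(\lambda_i/\lambda_j)^L$ can equal $1$), so the exponential polynomial representation of the subsequence $\langle u_{Ln} \rangle$ has non-zero parts and consequently $f \not\equiv 0$.

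The crux is to effectively test whether $a_j = 0$ for a given $j$. By~\eqref{eq:polyF}, $a_j$ is a specific polynomial expression in $\log \lambda_1^L, \ldots, \log \lambda_s^L$ with coefficients in the number field $\mathbb{K}$. Applying Masser's Theorem~\ref{th:masser} to $\lambda_1^L, \ldots, \lambda_s^L$ yields a finite basis for the lattice of multiplicative relations; since $\log$ is a group isomorphism from $1 + p\mathbb{Z}_p$ onto $p\mathbb{Z}_p$, this same data describes all $\mathbb{Q}$-linear relations among $\log \lambda_1^L, \ldots, \log \lambda_s^L$. Extracting a maximal $\mathbb{Q}$-linearly independent sub-family $\log \lambda_{i_1}^L, \ldots, \log \lambda_{i_t}^L$ and substituting into~\eqref{eq:polyF} rewrites each $a_j$ as $P_j(\log \lambda_{i_1}^L, \ldots, \log \lambda_{i_t}^L)$ for an explicitly computable polynomial $P_j \in \mathbb{K}[Y_1, \ldots, Y_t]$. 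The weak $p$-adic Schanuel Conjecture~\ref{con:schanuel} then guarantees that the tuple $(\log \lambda_{i_1}^L, \ldots, \log \lambda_{i_t}^L)$ is algebraically independent over $\overline{\mathbb{Q}}$, so $a_j = 0$ iff $P_j$ is the zero polynomial---a condition decidable by symbolic manipulation in $\mathbb{K}$. Iterating through $j = 0, 1, 2, \ldots$ terminates at the sought $\ell$ since $f \not\equiv 0$.

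Having located $\ell$, I would compute $\nu := v_p(a_\ell)$ from the rational-coefficient expansion~\eqref{eq:power-series}; since its tail has $v_p$ bounded below by $(p-2)k/(p-1)$ and $\nu < \infty$, summing sufficiently many initial terms reveals $\nu$ exactly. Setting $M := L p^{\nu+1}$, Proposition~\ref{prop:zero} gives $u_{Mn} = f(p^{\nu+1} n) \neq 0$ for every non-zero $n \in \mathbb{Z}$, as required. The main obstacle is precisely the algorithmic test $a_j \stackrel{?}{=} 0$: Brumer's Theorem~\ref{thm:p-adic-baker} on its own only handles the degree-one part of~\eqref{eq:polyF}, so for $j \geq 2$ one genuinely needs the algebraic-independence strength of Conjecture~\ref{con:schanuel} to equate the semantic condition $a_j = 0$ with the syntactic condition $P_j \equiv 0$; without it, one could not rule out accidental polynomial coincidences among the logarithms $\log \lambda_i^L$.
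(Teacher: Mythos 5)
Your proposal follows essentially the same route as the paper's proof: rewrite each coefficient $a_j$ as a polynomial in a maximal $\mathbb{Q}$-linearly independent set of logarithms $\log\lambda_i^L$ (obtained via Masser's bound on multiplicative relations), use the weak $p$-adic Schanuel Conjecture to equate $a_j=0$ with the identical vanishing of that polynomial, compute $\nu=v_p(a_\ell)$ from the rational expansion \eqref{eq:power-series}, and conclude via Proposition~\ref{prop:zero} with $M=Lp^{\nu+1}$. Your added remarks (the explicit appeal to non-degeneracy to ensure $f\not\equiv 0$, and the observation that Brumer's theorem only covers the linear case) are consistent with, and slightly more explicit than, the paper's own presentation.
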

\begin{proof}
  As explained in Section~\ref{sec:power}, there exists a prime $p$
  and a positive integer $L$ such that $u_{Ln} = f(n)$ for all
  $n \in \mathbb{Z}$, for the $p$-adic power series $f(X)=\sum_{j=0}^\infty a_j X^j$
  whose coefficients are given by the
  formula~\eqref{eq:exp-log}.  Recall that in this formula the
  $\lambda_i$ are the characteristic roots of $\boldsymbol u$ and the
  $Q_i$ are the coefficients appearing in the exponential polynomial
  formula~\eqref{eq:EP}.
  
Pick a maximal multiplicatively independent subset of characteristic
roots.  Without loss of generality we can write this set as
$\{\lambda_1,\ldots,\lambda_t\}$ for some $t\leq s$.  As discussed in
Section~\ref{sec:background}, given the characteristic polynomial of
the recurrence, we can compute such a set, as well as integers $m_i$ and $n_{i,j}$
for $i \in \{1,\ldots,s\}$ and $j\in \{1,\ldots,t\}$, where 
the $m_i$ are non-zero, such that for all $i \in \{1,\ldots,s\}$ we have
\[ \lambda_i^{m_i} = \lambda_1^{n_{i,1}} \cdots \lambda_t^{n_{i,t}} \,
  .\]
Raising the left- and right-hand sides in the above equation to
  the 
  power $L$ and then taking logs, we get that
  \[ \log \lambda^L_i = \frac{n_{i,1}}{m_i} \log \lambda^L_1 + \cdots
    + \frac{n_{i,t}}{m_i} \log \lambda^L_t \] for all
  $i\in\{1,\ldots,s\}$.  In other words, for all
  $i\in\{1,\ldots,s\}$ we have that
  $\log \lambda^L_i=\ell_i( \log \lambda^L_{1},\ldots,\log
  \lambda^L_t)$ for an effectively computable linear form $\ell_i(X_1,\ldots,X_t)$ with rational
  coefficients.

  For $j\in\mathbb{N}$, define $F_j \in \mathbb{K}[X_1,\ldots,X_t]$ by
  \[ F_j(X_1,\ldots,X_t) := \frac{1}{j!} \sum_{i=1}^s \sum_{k=0}^j
    \binom{j}{k} L^k Q_i^{(k)}(0) \ell_i(X_1,\ldots,X_t)^{j-k}  \, .\]
Then by Equation~\eqref{eq:polyF} we have
\begin{gather}
  a_j = F_j\left(\log \lambda_1^L,\ldots,\log \lambda_t^L\right) \, .
\label{eq:G}
\end{gather}
We claim that $a_j \neq 0$ if $F_j$ is not identically zero.  Since
the coefficients of $F_j$ are algebraic over $\mathbb{Q}$ and the set
$\{\log \lambda_1,\ldots,\log \lambda_t\}$ is linearly independent
over $\mathbb{Q}$, the claim follows immediately from
Equation~\ref{eq:G} and the weak $p$-adic Schanuel Conjecture
(Conjecture~\ref{con:schanuel}).

We can now use the following procedure to compute a positive integer
$M$ such that $u_{Mn} \neq 0$ for all $n \in \mathbb{Z}$:
\begin{enumerate}
\item Successively compute the polynomials $F_0,F_1,\ldots$ .
\item Let $j_0$ be the least index $j$ such that $F_j$ is not identically zero.
  Compute $\nu:=v_p(a_{j_0})$ using the
  series~\eqref{eq:power-series}.  The weak $p$-adic Schanuel Conjecture
  implies that $a_{j_0}\neq 0$ and hence the computation of
  $v_p(a_{j_0})$ terminates.
\item Set $M:=Lp^{\nu+1}$.  Applying Proposition~\ref{prop:zero}, we have 
$u_{Mn}\neq 0$ for all non-zero integers~$n$. 
\end{enumerate}

Note that $j_0$ is well defined in Line~2, since if all the $a_j$
were zero, then $\boldsymbol u$ would be the identically zero
sequence, contradicting our initial assumption.
\end{proof}

A couple of remarks about the proof of Proposition~\ref{prop:oracle}
are in order.
\begin{remark}
Observe that the $p$-adic Schanuel Conjecture is
only required for termination of the procedure described at the end of
the proof.  If the procedure terminates then it is certain that
$a_{j_0}$ is the first non-zero coefficient of the power
series~\eqref{eq:power-series} and hence the outputted value of $M$ is
guaranteed to be such that $u_{Mn}\neq 0$ for all non-zero integers
$n$.
\end{remark}
\begin{remark}
  
  Examining the expression~\eqref{eq:polyF} and noting that
  $Q_i^{(k)}(0) =0$ for $k > \deg(Q_i)$, we see that the sequence
  $\langle j!a_j \rangle_{j=0}^\infty$ is given by an
  exponential polynomial corresponding to a (non-rational) LRS of order $d$ and
  hence at least one of $a_0,a_1,\ldots,a_{d-1}$ is non-zero.   This means that
  the index $j_0$ in Line~2 of the above procedure is at most $d-1$.
\end{remark}
%\begin{remark}
%  Concerning Line~2 of the above procedure, in case $j_0 \in \{0,1\}$
%  there is no need to invoke Schanuel's Conjecture to argue that
%  $a_{j_0} \neq 0$.  This is obvious when $j_0=0$, since the
%  polynomial $F_0$ is a constant.  For the case $j_1=1$, we observe
 % that $F_1$ is a degree-1 polynomial with algebraic coefficients.
 % But Theorem~\ref{thm:p-adic-baker} implies that such a polynomial
%  cannot vanish when evaluated on a tuple of logarithms of algebraic
 % numbers that are linearly independent over $\mathbb{Q}$.  Thus, by
 % Equation~\eqref{eq:G}, $a_1$ is non-zero.  \textcolor{red}{MAYBE
  %  GIVE AN EXAMPLE WHERE $j_0=2$.}
   % \end{remark}

\begin{theorem}
  Assuming the weak $p$-adic Schanuel Conjecture, there is a Turing
  reduction from the Skolem Problem to the Bi-Skolem Problem.
  \label{thm:Turing-Reduce}
  \end{theorem}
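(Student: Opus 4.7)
The plan is to reduce the Skolem Problem to computing the entire (finite) set of zeros of the bi-infinite extension $\overline{\boldsymbol u}$ of a given LRS and then checking whether any of these zeros is non-negative. Extending an LRS to an LRBS by running the recurrence backwards is immediate, and to control degeneracies I would first decompose $\overline{\boldsymbol u}$ into an interleaving of finitely many non-degenerate LRBS components, handling any identically-zero component directly (its zero set is the associated arithmetic progression, whose intersection with $\mathbb{N}$ is easily tested). By Theorem~\ref{thm:SML}, every remaining component has a finite zero set.

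For each non-zero non-degenerate component $\boldsymbol v$, the reduction invokes a recursive subroutine $\mathrm{FindZeros}(\boldsymbol v)$ built from the Bi-Skolem oracle and Proposition~\ref{prop:oracle}:
\begin{enumerate}
\item Query the Bi-Skolem oracle on $\boldsymbol v$; if it returns ``no zero,'' return $\emptyset$.
\item Use Proposition~\ref{prop:oracle} (the step that relies on the weak $p$-adic Schanuel Conjecture) to compute a positive integer $M$ such that $v_{Mn}\neq 0$ for every $n\in\mathbb{Z}\setminus\{0\}$.
\item Directly evaluate $v_{-M+1},\ldots,v_{M-1}$ and record any zeros in this window.
\item For each $r\in\{1,\ldots,M-1\}$, recursively call $\mathrm{FindZeros}$ on the sub-LRBS $\langle v_{Mn+r}\rangle_{n\in\mathbb Z}$, and lift each returned sub-LRBS zero $n$ to a zero of $\boldsymbol v$ at index $Mn+r$.
\item Return the deduplicated union of all recorded zeros.
\end{enumerate}
The Skolem Problem is then answered by intersecting the computed zero set of $\overline{\boldsymbol u}$ with $\mathbb{N}$.

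The main obstacle will be establishing termination. A naive variant that tested only $v_0$ in step~(3) can loop: a zero at index~$-1$ sits in residue class $M-1$ and reappears in the corresponding sub-LRBS again at index~$-1$, since $(-1-(M-1))/M=-1$. The window check in step~(3) breaks this cycle---any zero $n_i$ of $\boldsymbol v$ with $|n_i|<M$ is captured immediately, while any uncaptured zero at $|n_i|\ge M$ descends to a sub-LRBS zero at index $(n_i-r)/M$, whose absolute value is at most $\lceil|n_i|/M\rceil\le|n_i|/M+1$. Since Proposition~\ref{prop:oracle} always produces $M=Lp^{\nu+1}\ge 3$ (as $p$ is an odd prime with $L\ge 1$ and $\nu\ge 0$), the absolute value of the surviving zero indices strictly shrinks, by at least a factor of three per recursive level. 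Combined with the finiteness of $Z(\boldsymbol v)$ from Theorem~\ref{thm:SML}, this ensures that after $O(\log\max_i|n_i|)$ levels every zero is captured and the recursion terminates, yielding the claimed Turing reduction.
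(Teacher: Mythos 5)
Your overall architecture matches the paper's, but the termination argument has a genuine gap at exactly the point you flagged: the zero stuck at index $-1$. Your window check in step~(3) \emph{records} any zero $n_i$ with $|n_i|<M$, but it does nothing to stop the recursion in step~(4), which unconditionally calls $\mathrm{FindZeros}$ on every residue class $r\in\{1,\ldots,M-1\}$. A zero of $\boldsymbol v$ at index $-1$ lies in class $r=M-1$ and sits at index $-1$ of the sub-LRBS $\langle v_{Mn+M-1}\rangle_{n\in\mathbb{Z}}$; the oracle for that sub-LRBS therefore answers ``yes,'' you compute a new $M'$, record the same zero again, and recurse on class $M'-1$, where it again sits at index $-1$---forever. (Positive-index zeros are unproblematic: they eventually descend to index $0$, which Proposition~\ref{prop:oracle} isolates into residue class $0$, on which you do not recurse.) You cannot simply prune class $M-1$ on the grounds that its only windowed zero has been recorded, because the oracle gives you no way to certify that this class contains no \emph{further} zeros---certifying that is precisely what the recursion is for.

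The paper closes this gap with one step you omit: after the oracle reports that the sequence has a zero, it \emph{searches} for a concrete zero $n_0$ (a terminating search, since a zero is guaranteed to exist) and re-indexes so that $n_0=0$ \emph{before} invoking Proposition~\ref{prop:oracle}. The known zero is then the unique zero of the class-$0$ subsequence, and every recursive call that does not return immediately must discover a zero of the original sequence at an index not previously seen. Since a non-degenerate LRBS has only finitely many zeros (the Skolem-Mahler-Lech Theorem for LRBS), the recursion tree has finitely many internal nodes and the procedure terminates. Your index-shrinking argument then becomes unnecessary---and, as it stands, it establishes only that every zero is eventually recorded, not that the recursion halts.
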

%    \begin{theorem}
%  Assuming the weak $p$-adic Schanuel Conjecture, the problem of
%  computing the set of zeros of a non-degenerate bi-infinite linear
%  recurrence sequence is Turing reducible to the Bi-Skolem Problem.%
%\label{thm:Turing-Reduce}
%\end{theorem}
  \begin{proof}
    We present a recursive procedure that uses an oracle for the
    Bi-Skolem Problem to compute all the zeros of a non-degenerate 
    LRBS that is not identically zero.

    Given a non-degenerate LRBS
    $\boldsymbol{u}=\langle u_n \rangle_{n=-\infty}^\infty$, we use
    the oracle for the Bi-Skolem Problem to determine whether there
    exists $n\in \mathbb{Z}$ with $u_n=0$.  If the oracle outputs that
    no such $n$ exists then the procedure terminates.  Otherwise one
    searches for $n_0 \in \mathbb{Z}$ such that $u_{n_0}=0$; clearly
    the search is guaranteed to terminate.  Having found $n_0$, by
    reindexing the sequence $\boldsymbol{u}$ we can suppose that
    $n_0=0$.  Now we use Proposition~\ref{prop:zero} to compute a
    positive integer $M$ such that $u_{Mn}\neq 0$ for all $n\neq 0$.
    We then divide the sequence $\boldsymbol{u}$ into $M$ subsequences
    $\boldsymbol{u}^{(0)},\ldots,\boldsymbol{u}^{(M-1)}$, where for
    $j\in \{0,\ldots,M-1\}$, the $j$-th subsequence is given by
    $u^{(j)}_n = u_{Mn+j}$ for all $n\in \mathbb{N}$.  We know that
    $n=0$ is the only zero of $\boldsymbol{u}^{(0)}$.  We now
    recursively call the procedure to find all zeros of the remaining
    subsequences $\boldsymbol{u}^{(1)},\ldots,\boldsymbol{u}^{(M-1)}$.
    Observe that the computation must terminate since each recursive
    call involves discovering a new zero of the original sequence
    $\boldsymbol{u}$, and by the version of the Skolem-Mahler-Lech Theorem 
    for LRBS, there are only finitely many such zeros.
  \end{proof}

 % Since an LRBS can effectively be decomposed into a family of
 % non-degenerate subsequences, we obtain:
 %   \begin{corollary}
%      Under the weak $p$-adic Schanuel Conjecture, the Skolem Problem
%      problem Turing reduces to the Bi-Skolem Problem.
%\label{corl:main}
 %   \end{corollary}

  If we restrict to recurrences of order at most $5$ then we obtain an
unconditional version of Theorem~\ref{thm:Turing-Reduce}.
\begin{theorem}
  \label{thm:order-5}
  There is a Turing reduction from the Skolem Problem for LRS of order
  at most $5$ to the Bi-Skolem Problem for simple LRBS of order at most
  $5$.
\end{theorem}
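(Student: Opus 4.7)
The plan is to follow the recursive procedure of Theorem~\ref{thm:Turing-Reduce} while replacing the use of the weak $p$-adic Schanuel Conjecture by an unconditional argument valid for order~$\le 5$, and additionally arranging that every oracle query concerns a simple LRBS of order~$\le 5$.

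For a simple input LRS I would run the procedure of Theorem~\ref{thm:Turing-Reduce} verbatim: the recurrences of the subsequences $u_{Mn+j}$ divide the minimal recurrence of $\boldsymbol u$, so they remain simple of order $\le 5$ and are legitimate queries for the restricted oracle. The only step that uses Schanuel is Proposition~\ref{prop:oracle}, where one concludes $a_{j_0}\ne 0$ from the non-vanishing of $F_{j_0}$ as a polynomial in the logarithms. The remark following that proposition bounds $j_0\le d-1\le 4$, and we have $s\le 5$ distinct characteristic roots. I would split on the multiplicative rank $t$ of $\{\lambda_1,\ldots,\lambda_s\}$: when $t=1$, every $\log\lambda_i^L$ is a rational multiple of one fixed log and $a_{j_0}$ reduces to an algebraic multiple of a power of that single non-zero log, non-vanishing by Brumer's theorem (Theorem~\ref{thm:p-adic-baker}); when $j_0\in\{0,1\}$, the expression for $a_{j_0}$ is affine in the logs and Brumer applies directly.

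For non-simple inputs of order~$\le 5$ I would first decompose the LRS into non-degenerate sub-LRS by the standard construction, reducing to the non-degenerate case. A non-simple non-degenerate LRS of order $\le 5$ has strictly fewer than five distinct characteristic roots, so its exponential-polynomial representation $\sum Q_i(n)\lambda_i^n$ is highly constrained: the $Q_i$ are of bounded degree with algebraic coefficients and there are few $\lambda_i$. Zeros arising from the polynomial factors $Q_i$ are algebraic and effectively computable, while the remaining purely exponential part defines a simple LRBS of order $\le 5$ that can be queried through the oracle, with the two contributions combined via a finite case analysis.

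The main obstacle is the residual case $t\ge 2$ with $j_0\ge 2$ in the simple branch, where $a_{j_0}$ is a genuinely non-linear polynomial in several $\mathbb{Q}$-linearly independent logarithms and Brumer does not apply directly. For order~$\le 5$ the admissible configurations of $(s,t,j_0)$ form a finite enumerable list. My strategy for this case is to exploit that $F_0,\ldots,F_{j_0-1}$ vanish identically as polynomials in the indeterminates $X_1,\ldots,X_t$, which forces rigid linear relations between the coefficient vector $(Q_i)$ and the exponent vectors $(\ell_i)$; these in turn impose symmetric multiplicative structure on the $\lambda_i$ (for instance inverse pairs $\lambda,1/\lambda$), letting $a_{j_0}$ be factored as an algebraic quantity times a linear form in a reduced set of logs, thereby falling within Brumer's reach.
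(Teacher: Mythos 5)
Your overall architecture (run the procedure of Theorem~\ref{thm:Turing-Reduce} and replace the Schanuel-dependent step by an unconditional non-vanishing argument) matches the paper's, but the way you try to discharge the non-vanishing is not the paper's route and, more importantly, it has a genuine gap exactly where you flag ``the main obstacle''. The case $t\ge 2$ and $j_0\ge 2$ is not a residual technicality: it is precisely the configuration in which $a_{j_0}$ is a non-linear polynomial in several $\mathbb{Q}$-linearly independent logarithms, which is the situation that forces the paper to invoke the weak $p$-adic Schanuel Conjecture in the general simple case. Your proposed mechanism---that the identical vanishing of $F_0,\dots,F_{j_0-1}$ forces inverse pairs or other multiplicative structure letting $a_{j_0}$ factor through a linear form---is asserted, not proved, and there is no evident reason it should hold: for a simple LRS, $F_1\equiv 0$ only says that $\sum_i \alpha_i\, n_{i,j}/m_i=0$ for each $j\le t$, a linear condition on the coefficients $\alpha_i$ (which are determined by the freely chosen initial values) that imposes nothing further on the multiplicative structure of the roots. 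As written, the argument does not close.

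The paper avoids this case altogether by a move you do not make: it invokes the classification from \cite{LiptonLNOPW22} (summarised in Appendix~\ref{sec:skolem-hard}), by which every LRS of order at most $5$ is already decidable \emph{except} for a specific family of \emph{simple} order-$5$ sequences $u_n=\sum_{i=1}^5\alpha_i\lambda_i^n$ with $\alpha_1\neq-\alpha_3$, $\lambda_1\lambda_2=\lambda_3\lambda_4$, and a prime ideal $\mathfrak{p}$ dividing $\lambda_1,\lambda_3$ but none of $\lambda_2,\lambda_4,\lambda_5$. For these, the paper shows directly that $a_1=\sum_i\alpha_i\log\lambda_i^L\neq 0$: assuming otherwise, one eliminates $\log\lambda_1^L$ using $\lambda_1\lambda_2=\lambda_3\lambda_4$ and applies Brumer's theorem (Theorem~\ref{thm:p-adic-baker}) repeatedly to produce an integer multiplicative relation among $\lambda_2,\dots,\lambda_5$ with non-zero exponent on $\lambda_3$, contradicting the $\mathfrak{p}$-valuation condition. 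Hence $j_0\le 1$ always, and the case you cannot handle never arises. The same classification disposes of non-simple inputs (every hard case is simple of order exactly $5$), whereas your treatment of them---separating ``zeros of the polynomial factors $Q_i$'' from zeros of a ``purely exponential part''---is not sound: the zero set of $\sum_i Q_i(n)\lambda_i^n$ is not a union of zero sets of any such pieces.
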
  
\begin{proof}
  As summarised in Appendix~\ref{sec:skolem-hard}, the Skolem Problem
can be decided
  for all LRS $\langle u_n \rangle_{n=0}^\infty$
  of order at most $5$ except those that (after scaling) have a closed
  form $u_n = \sum_{i=1}^5 \alpha_i \lambda_i^n$ satisfying the
  following three conditions, where $\lambda_1,\ldots,\lambda_5$ lie in
  the ring of integers $\mathcal{O}_{\mathbb{K}}$ of a number field
  $\mathbb{K}$:
  \begin{enumerate}
    \item $\alpha_1 \neq - \alpha_3$;
          \item $\lambda_1\lambda_2=\lambda_3\lambda_4$;
          \item there is a prime ideal  $\mathfrak{p}$ in $\mathcal{O}_{\mathbb{K}}$  that
            divides $\lambda_1$ and $\lambda_3$ but not
            $\lambda_2,\lambda_4,\lambda_5$.
\end{enumerate}

The theorem at hand is proven using the procedure described in the
proof of Theorem~\ref{thm:Turing-Reduce}, which uses as a subroutine
the procedure described in Proposition~\ref{prop:oracle}.  To avoid
relying on the weak $p$-adic Schanuel Conjecture, it suffices to give
an unconditional proof of the termination of the latter procedure when
invoked on LRBS whose closed-form representation satisfies the above
conditions.  In other words, we must show that for such LRBS one can
compute a positive integer $M$ such that $u_{Mn}\neq 0$ for all
$n\in\mathbb{Z}$.

Let $p$ be a prime such that there is an embedding of $\mathbb{K}$
into $\mathbb{Q}_p$.  Recall from Section~\ref{sec:power} that there
exists a positive integer $L$ such that $u_{Ln}=f(n)$ for a $p$-adic
power series $f(X)=\sum_{j=0}^\infty a_j X^j$ such that
$a_1 = \sum_{i=1}^5 \alpha_i \log \lambda^L_i$.  The termination of
the procedure described in the proof of Proposition~\ref{prop:oracle}
will be assured if $a_1\neq 0$.
We claim that for an LRBS satisfying the above three conditions, one 
has $a_1=\sum_{i=1}^5 \alpha_i \log \lambda^L_i \neq 0$.

To prove the claim,
suppose for a contradiction that
$\sum_{i=1}^5 \alpha_i \log \lambda^L_i = 0$.  Raising to the $L$-th
power and then taking logarithms in Condition~2 above, we also have
$\log\lambda^L_1+\log\lambda^L_2-\log\lambda^L_3-\log\lambda^L_4=0$.
Combining the two previous equations to cancel $\log \lambda_1^L$ we
have
\begin{gather} (\alpha_2-\alpha_1)\log \lambda^L_2+(\alpha_3+\alpha_1)\log \lambda^L_3+
  (\alpha_4+\alpha_1)\log \lambda^L_4 + \alpha_5 \log\lambda^L_5=0 \, . 
\label{eq:lin-form}
\end{gather}

From Condition~1 ($\alpha_1\neq-\alpha_3$), we have that the
coefficient of $\log \lambda^L_3$ in Equation~\eqref{eq:lin-form} is
non-zero.  Applying Theorem~\ref{thm:p-adic-baker}, possibly several
times, we eventually obtain an equation
$\sum_{i=2}^5 \beta_i \log \lambda^L_i=0$ such that the $\beta_i$ are
integers and $\beta_3\neq 0$.  Equivalently, we have a multiplicative
relation among the characteristic roots that involves $\lambda_3$ but
not $\lambda_1$.  But this contradicts Condition~3 and the proof is
concluded.
\end{proof}

      \begin{theorem}
        \label{thm:decide-main}
        The Skolem Problem for simple LRS is decidable assuming the
        Skolem Conjecture and the weak $p$-adic Schanuel Conjecture.
        The Skolem Problem for LRS of order at most 5 is decidable
        assuming the Skolem Conjecture.
        \end{theorem}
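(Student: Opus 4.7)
The plan is to assemble the theorem from three ingredients already on hand: the conditional reduction of Theorem~\ref{thm:Turing-Reduce}, the unconditional order-5 reduction of Theorem~\ref{thm:order-5}, and the fact that the Skolem Conjecture directly yields a decision procedure for the Bi-Skolem Problem on simple LRBS.

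First I would verify the last of these. Given a simple LRBS $\boldsymbol u$ taking values in $\mathbb{Z}[\tfrac{1}{b}]$, run two semi-algorithms in parallel. The first enumerates $n \in \mathbb{Z}$ (say in the order $0, 1, -1, 2, -2, \ldots$) and tests whether $u_n = 0$; this terminates precisely when $\boldsymbol u$ has a zero. The second enumerates integers $m \geq 2$ with $\gcd(b,m)=1$ and, for each such $m$, checks whether $u_n \not\equiv 0 \bmod m$ for all $n\in\mathbb{Z}$. The latter check is decidable because $\langle u_n \bmod m\rangle_{n=-\infty}^\infty$ is eventually periodic in both directions (the companion matrix $A$ acts invertibly modulo $m$ since $\gcd(b,m)=1$ entails $c_d$ is a unit mod $m$), so one only needs to test one full period. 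Under the Skolem Conjecture, exactly one of the two searches terminates, yielding a decision.

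For the first statement of the theorem, let $\boldsymbol u$ be a simple LRS. Its unique extension to an LRBS (obtained by running the minimal recurrence backwards) has the same minimal characteristic polynomial and is therefore a simple LRBS. Apply the Turing reduction of Theorem~\ref{thm:Turing-Reduce}, which assumes the weak $p$-adic Schanuel Conjecture, using the Bi-Skolem decision procedure above as the oracle. This gives decidability of the Skolem Problem for simple LRS assuming both the Skolem Conjecture and the weak $p$-adic Schanuel Conjecture.

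For the second statement, the unconditional reduction of Theorem~\ref{thm:order-5} reduces the Skolem Problem for LRS of order at most 5 to the Bi-Skolem Problem for simple LRBS of order at most 5. Composing this reduction with the above Bi-Skolem procedure yields decidability assuming only the Skolem Conjecture. The only thing worth double-checking is that the reduction of Theorem~\ref{thm:order-5} indeed produces \emph{simple} LRBS of order $\leq 5$ as oracle queries, which is explicit in its statement, so no extra work is required. The hard part of the theorem is entirely bundled into the two reduction theorems; the present proof is purely an assembly step.
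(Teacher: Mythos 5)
Your proposal is correct and follows exactly the route the paper intends: the paper leaves Theorem~\ref{thm:decide-main} without an explicit proof precisely because it is the assembly you describe, namely composing the reductions of Theorems~\ref{thm:Turing-Reduce} and~\ref{thm:order-5} with the parallel-search decision procedure for the Bi-Skolem Problem on simple LRBS that the Skolem Conjecture provides (as noted in the paper's introduction). Your additional checks---that the mod-$m$ test is decidable by periodicity and that the oracle queries remain simple LRBS---are the right details to verify and are consistent with the paper.
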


        \begin{remark}
        Given that the Skolem Conjecture remains open in general, it is
        worth remarking that the proof of
        Theorem~\ref{thm:decide-main} sustains the following more
        general formulation: Consider a class $\mathcal{C}$ of simple
        LRBS that is closed under taking subsequences and under
        translations.  If the Skolem Conjecture holds for
        $\mathcal{C}$ then, assuming the weak $p$-adic Schanuel
        Conjecture, the Skolem Problem is decidable over LRS in
        $\mathcal{C}$.
\end{remark}

\section{The 
{\normalfont\bfseries\scshape Skolem}
 Tool
}
\label{tool}

We implemented our algorithm in the \textsc{Skolem} tool, which finds
all zeros (at both positive and negative indices) for simple
integer LRS, and produces independent certificates guaranteeing that there are no
further zeros. Even though we do not have complexity bounds,
\textsc{Skolem} can efficiently handle many interesting
examples, including several from the literature for which no
proof technique was previously known to apply. Our tool is available online at
\url{https://skolem.mpi-sws.org} and is accompanied by several
examples which can be experimented with on the spot.

The implementation is written in Python, using the SageMath
computer-algebra extension. This allows for the efficient and exact
manipulation of mathematical objects, including elements of
$\mathbb{Z}_p$.  Python handles integers of arbitrary sizes seamlessly,
making it especially suitable for our purposes, since even small
examples can give rise to very large numbers within the inner workings of
our algorithm. 

% . Most languages support integers only
% up to approximately $2^{256}$ (or otherwise some fixed constant),
% whereas Python is only restricted by the available system memory. Such
% a feature is necessary: even simple examples lead to large numbers,
% taking subsequences entails that the coefficients of the recurrence
% and the initial values of these subsequences are very
% large. Nevertheless Python handles these comfortably.

\begin{example}
Consider the LRS from \cite[Example 2.4]{LiptonLNOPW22}:
\[ u_{n+5} = 9u_{n+4} - 10u_{n+3} + 522u_{n+2} - 4745u_{n+1} +
  4225u_n \] with initial values (for $n=0,1,2,3,4$) of
$\langle -30, -27, 0, 469,$ $1762 \rangle$. It is shown
in~\cite{LiptonLNOPW22} to have a unique zero at index $2$ by being
non-zero modulo $12625$ at all indices larger than $2$. The
\textsc{Skolem} tool establishes this in a simpler way: after finding
$u_2 =0 $, the tool calculates that there are no zeros in $\langle u_{2+14n}\rangle_{n=-\infty}^\infty$
for $n\ne 0$. Then the tool computes that $\langle u_{k+14n}\rangle_{n=-\infty}^\infty$ is non-zero
modulo $29$ for each even $k\ne 2$, and non-zero modulo 2 for each odd
$k$ (where $0 \leq k \leq 13$). Observe that the computed modulo
classes, and thus the resulting certificate, is much smaller than
those arising from $12625$ as used in \cite{LiptonLNOPW22}.
\end{example}

\begin{example}
  Consider the LRS from \cite[Example 2.5]{LiptonLNOPW22}:
  \[ u_{n+6} =6u_{n+5} -26u_{n+4} + 66u_{n+3} -130u_{n+2} + 150u_{n+1}
    -125u_{n} \] with initial values (for $n=0,1,2,3,4,5$) of
  $\langle 0, 3, 11, -12, -125,$ $-177 \rangle$, which was established
  at the time of writing to lie beyond the reach of existing known techniques. The
  \textsc{Skolem} tool is able to show using the methods developed in
  the present paper that there are indeed no further zeros (other than $u_0=0$).
\end{example}

\begin{example}
  Consider the reversible order-$8$ LRS from \cite[Example
  3.5]{LiptonLNOPW22}:
  \[u_{n+8} = 6u_{n+7} -25u_{n+6} +66u_{n+5} -120u_{n+4} +
    150u_{n+3}-89u_{n+2}+18u_{n+1}-u_n \] with initial values (for
  $n = 0, \ldots, 7$) of $\langle 0,0,-48,-120,0,520,$
  $624,-2016 \rangle$, which likewise was established at the time to
  lie beyond the reach of existing techniques. \textsc{Skolem} shows
  that there are no zeros other than those lying at indices $0$, $1$,
  and $4$.
\end{example}

\subsection{Testing}

\begin{table}[t]
\begin{tabular}{@{}l|rrrrrr|rrrr@{}}

\toprule
     & \multicolumn{6}{c|}{Timeout of $60$ seconds} & \multicolumn{4}{c}{Timeout of $60\cdot \mathit{order}$ seconds} \\ \hline

 Order &  \multicolumn{1}{>{\raggedleft\arraybackslash}p{0.7cm}}{Total}& \multicolumn{1}{>{\raggedleft\arraybackslash}p{0.7cm}}{Success} & \multicolumn{1}{>{\raggedleft\arraybackslash}p{0.75cm}}{Degen-erate} & \multicolumn{1}{>{\raggedleft\arraybackslash}p{0.8cm}}{Not simple} &  \multicolumn{1}{>{\raggedleft\arraybackslash}p{0.85cm}}{Timeout} & \multicolumn{1}{>{\raggedleft\arraybackslash}p{1.1cm}|}{Timeout  \%} &        \multicolumn{1}{>{\raggedleft\arraybackslash}p{0.8cm}}{Total}          &   \multicolumn{1}{>{\raggedleft\arraybackslash}p{0.8cm}}{Success}     &   \multicolumn{1}{>{\raggedleft\arraybackslash}p{0.85cm}}{Timeout}     &    \multicolumn{1}{>{\raggedleft\arraybackslash}p{1.1cm}}{Timeout  \%}                         \\ \hline
2  & 9250 & 8836 & 358 & 50 & 0    & 0.00\%  & 1245 & 1200 & 0    & 0.00\%  \\
3  & 8995 & 8919 & 74  & 2  & 0    & 0.00\%  & 1327 & 1322 & 0    & 0.00\%  \\
4  & 9195 & 9157 & 35  & 2  & 1    & 0.01\%  & 1395 & 1392 & 0    & 0.00\%  \\
5  & 9188 & 8700 & 15  & 3  & 470  & 5.12\%  & 1303 & 1290 & 11   & 0.84\%  \\
6  & 9172 & 4905 & 10  & 6  & 4251 & 46.35\% & 1318 & 952  & 366  & 27.77\% \\ \hline
7  & 9213 & 1339 & 12  & 0  & 7862 & 85.34\% & 1328 & 310  & 1016 & 76.51\% \\
8  & 9157 & 378  & 10  & 0  & 8769 & 95.76\% & 1249 & 73   & 1173 & 93.92\% \\
9  & 9143 & 87   & 4   & 3  & 9049 & 98.97\% & 1330 & 18   & 1312 & 98.65\% \\
10 & 9047 & 25   & 8   & 1  & 9013 & 99.62\% & 1294 & 7    & 1286 & 99.38\% \\ \bottomrule
Total & 82360& 42346 & 526 & 67 & 39415 & 47.86\% & 11789  &6564 &   5164  &  43.80\% 
                  \\ \bottomrule
\end{tabular}
\caption{Table showing the distribution of outcomes by order. The line between orders 6 and 7 shows the boundary beyond which more than 50\% of runs timeout. The second experiment shows the timeout rate when the timeout is increased to $60s\cdot \mathit{order}$ (`degenerate' and `not-simple' counts omitted as the distribution is similar to the 60s timeout experiment and unaffected by the timeout). }
\label{table:successrate}

\end{table}
The \textsc{Skolem} tool was tested on a suite of random LRS, with the
order taken uniformly between $2$ and $10$ and the coefficients taken
uniformly at random between $-20$ and $20$. Tests were run for 48
hours using a $60$-second timeout\footnote{Testing was conducted using
  SageMath 9.5 in Docker on a Dell PowerEdge M620 blade equipped with
  2$\times$ 3.3 GHz Intel Xeon E5-2667 v2 (2$\times 8$ cores, 32 with
  hyper-threading) and 256GB ram. Testing was restricted to 16
  parallel threads (50\% of the computer's resources) for
  institutional reasons.}, generating 82367 test instances.\footnote{$7$
  instances were discarded: $6$ happened to be the zero
  sequence, one resulted in an exception (outside of the main tool code)
  which was later fixed.}

The results are presented in \cref{table:successrate,table:stats}. In
particular, from order $7$ onwards the tool is unable to handle more than half of
the instances within one minute, with the timeout percentage jumping
significantly from order 6. Both degenerate and non-simple LRS
instances are very sparse, and 
%Given a random LRS, the probability that
%it is degenerate or non-simple is zero, and
as expected the higher the order the fewer such instances were randomly produced.
%we observe
%that such a random instance is less likely to be stumbled upon the
%higher the order.

The experiments were re-run using a timeout of
$60\cdot \mathit{order}$ seconds (i.e., ranging from 2--10 minutes) in
order to determine whether the $60$-second timeout was too strict. The
timeout percentage does decrease, but the overall pattern shows that
the vast majority of LRS of order $7$ and above could not be handled to completion before
the timeout.

\cref{table:stats} presents statistical information. In the main
experiment the average time is below 9 seconds for order-$6$ examples
that succeed within 60 seconds. However, the decrease from order-$8$
onwards is
explained by there being significantly fewer examples succeeding
within 60 seconds. On average there are very few
zeros (as can be expected) and those that do occur are almost
always those occurring within the initial value (the largest zero is
nearly always at index less than the order).

The average maximum jump step used (i.e., $M$ such that $\langle u_{Mn}\rangle$ has no
zeros for $n\ne 0$ and $u_0 = 0$), is observed on average to grow with the order
(except at order 10 with only 25 successful samples).

\begin{table}[]
\begin{tabular}{rrrrrrrr|r}
\hline
\multicolumn{1}{l}{Order} & \multicolumn{1}{>{\raggedleft\arraybackslash}p{1.15cm}}{mean time (seconds)} & \multicolumn{1}{>{\raggedleft\arraybackslash}p{1.15cm}}{mean count   of zeros} & \multicolumn{1}{>{\raggedleft\arraybackslash}p{1.15cm}}{max count of zeros} & \multicolumn{1}{>{\raggedleft\arraybackslash}p{1.15cm}}{mean max zero} & \multicolumn{1}{>{\raggedleft\arraybackslash}p{0.8cm}}{max zero index} & \multicolumn{1}{>{\raggedleft\arraybackslash}p{1.15cm}}{mean tree depth} & \multicolumn{1}{>{\raggedleft\arraybackslash}p{1.15cm}|}{mean max jump} & \multicolumn{1}{>{\raggedleft\arraybackslash}p{1.5cm}}{mean time (seconds) $60s\cdot \mathit{order}$ } \\ \hline
2 & 0.03  & 0.06 & 1 & 0.63 & 6 & 1.06 & 5.11   & 0.03   \\
3  & 0.05  & 0.08 & 3 & 1.03 & 5 & 1.08 & 13.56  & 0.06   \\
4  & 0.19  & 0.10 & 3 & 1.58 & 7 & 1.10 & 37.05  & 0.22   \\
5  & 4.82  & 0.11 & 2 & 2.05 & 6 & 1.11 & 107.39 & 10.07  \\
6  & 8.95  & 0.20 & 2 & 2.58 & 7 & 1.20 & 254.12 & 55.36  \\ \hline
7  & 11.72 & 0.30 & 2 & 2.80 & 9 & 1.30 & 482.34 & 70.19  \\
8  & 8.07  & 0.35 & 2 & 3.51 & 8 & 1.35 & 533.92 & 68.21  \\
9  & 7.38  & 0.38 & 1 & 4.24 & 8 & 1.38 & 689.33 & 138.40 \\
10 & 5.71  & 0.40 & 1 & 5.20 & 9 & 1.40 & 249.60 & 112.11\end{tabular}
\caption{Table listing statistical information for successful runs, by
  order. Line between orders 6 and 7 shows the boundary beyond which more than 50\% of runs timeout,
resulting in skewed analysis for the subsequent rows. For the second experiment, with timeout of $60\cdot \mathit{order}$ seconds, only the mean time is shown as there are fewer data points.
}
\label{table:stats}
\end{table}

\appendix

\section{Hard Cases of the Skolem Problem at Order 5}
\label{sec:skolem-hard}
      As explained in~\cite{LiptonLNOPW22}, the Skolem Problem is
      known to be decidable for all LRS of
      order at most 5 except for those sequences
      $\boldsymbol u = \langle u_n \rangle_{n=0}^\infty$ having an
      exponential-polynomial representation 
\begin{gather}
      u_n = \alpha_1 \lambda_1^n + \overline{\alpha_1}\overline{\lambda_1}^n
        + \alpha_2 \lambda_2^n + \overline{\alpha_2}\overline{\lambda_2}^n + \alpha_3
        \lambda_3^n 
\label{eq:SKOLEM5}
      \end{gather}
      such that
      $\alpha_1,\alpha_2,\alpha_3,\lambda_1,\lambda_2,\lambda_3 \in \overline{\mathbb{Q}}$
      satisfy $|\lambda_1|=|\lambda_2|>|\lambda_3|$ and
      $\lambda_1,\lambda_2,\lambda_3$ are not all units.  It is
      further shown in~\cite{LiptonLNOPW22} that by scaling sequences
      of this form we can assume that there exists a prime ideal
      $\mathfrak{p}$ in the ring of integers of the number field
      generated by $\lambda_1,\lambda_2,\lambda_3$ such that
      $\mathfrak{p}$ divides $\lambda_1$ and $\lambda_2$, but not
      $\overline{\lambda_1},\overline{\lambda_2}$ and $\lambda_3$.

      Here we make the further observation that for non-degenerate LRS
      of the form~\eqref{eq:SKOLEM5}, under the assumption that
      $|\alpha_1|=|\alpha_2|$, there is a computable upper bound on $n$ such that
      $u_n=0$.

      By scaling we can assume without loss of generality
      that $|\lambda_1|=|\lambda_2|=1$ and $|\alpha_1|=|\alpha_2|=1$.
      Thus we can write 
      $\lambda_1 = e^{i\theta_1}$ and $\lambda_2 = e^{i \theta_2}$ for
      $\theta_1,\theta_2 \in [0,2\pi)$ and we can put
      $\alpha_1 = e^{i \phi_1}$ and $\alpha_2 = e^{i \phi_2}$ for
      $\phi_1,\phi_2 \in [0,2\pi)$.  Then we have
\begin{eqnarray*}
  u_n &=&
  \alpha_1\lambda_1^n+\overline{\alpha_1}\overline{\lambda_1}^n+
          \alpha_2\lambda_2^n+\overline{\alpha_2}\overline{\lambda_2}^n
          + \alpha_3\lambda_3^n \\
                                  &=& 2 \cos(n\theta_1+\phi_1)+
                                      2 \cos(n\theta_2+\phi_2) + \alpha_3\lambda_3^n \\
&=&   4\left(\cos\left(\frac{n(\theta_1+\theta_2)+\phi_1+\phi_2}{2}\right) 
    \cos\left(\frac{n(\theta_1-\theta_2)+\phi_1-\phi_2}{2}\right)\right)
    +\alpha_3\lambda_3^n 
                                    \, .                                      
\end{eqnarray*}

By non-degeneracy of $\boldsymbol u$, the terms
$\cos\left(\frac{n(\theta_1+\theta_2)+\phi_1+\phi_2}{2}\right)$ and
$\cos\left(\frac{n(\theta_1-\theta_2)+\phi_1-\phi_2}{2}\right)$
are respectively zero for at most one value of $n \in \mathbb{N}$.
Furthermore, using Baker's Theorem on linear forms in logarithms
(see~\cite{MST84,Ver85} for details), each of these terms has
a lower bound (when non-zero) of the form $\frac{c}{n^d}$ for
explicitly computable constants $c$ and $d$.  Since $|\lambda_3|<1$ it
follows that $u_n \neq 0$ for all $n \geq n_0$ for some effective
threshold $n_0$.

\bibliography{refs}
\end{document}